\newcommand{\cmark}{\ding{51}}%
\newcommand{\xmark}{\ding{55}}%
\definecolor{lightgreen}{rgb}{.9,1,.9}
\definecolor{lightred}{rgb}{1,.415,.415}
\definecolor{lightblue}{rgb}{.415,.415,1}
\newcolumntype{L}[1]{>{\raggedright\arraybackslash}p{#1}}
\newcolumntype{C}[1]{>{\centering\arraybackslash}p{#1}}
\newcolumntype{R}[1]{>{\raggedleft\arraybackslash}p{#1}}
\theoremstyle{plain} 
\newtheorem{theorem}{Theorem}
\newtheorem{lemma}{Lemma}
\newtheorem{assumption}{Assumption}
\newtheorem*{suptheorem}{Theorem}
\def\defn{\,\coloneqq\,}
\def\diag{{\mathsf{diag}}}
\def\d{{\mathsf{\, d}}}
\def\Zer{{\mathsf{Zer}}}
\def\mod{{\mathsf{mod}}}
\def\Im{{\mathsf{Im}}}
\def\prox{{\mathsf{prox}}}
\def\max{{\mathsf{max}}}
\def\min{\mathop{\mathsf{min}}}
\def\Lmax{{L_{\mathsf{\tiny max}}}}
\def\Mmax{{M_{\mathsf{\tiny max}}}}
\def\asarrow{{\,\xrightarrow{\mathsf{\tiny a.s.}}\,}}
\def\C{\mathbb{C}}
\def\R{\mathbb{R}}
\def\E{\mathbb{E}}
\def\zerobm{{\bm{0}}}
\def\ebm{{\bm{e}}}
\def\xbm{{\bm{x}}}
\def\zbm{{\bm{z}}}
\def\ybm{{\bm{y}}}
\def\zbm{{\bm{z}}}
\def\nbm{{\bm{n}}}
\def\ubm{{\bm{u}}}
\def\vbm{{\bm{v}}}
\def\varepsilonbar{{\overline{\varepsilon}}}
\def\Abm{{\bm{A}}}
\def\Dbm{{\bm{D}}}
\def\Pbm{{\bm{P}}}
\def\Fbm{{\bm{F}}}
\def\sigmabm{{\bm{\sigma}}}
\def\thetabm{{\bm{\theta }}}
\def\Ubf{{\mathbf{U}}}
\def\Ibf{{\mathbf{I}}}
\def\Abm{{\bm{A}}}
\def\Dbm{{\bm{D}}}
\def\Pbm{{\bm{P}}}
\def\Fbm{{\bm{F}}}
\def\Fcal{{\mathcal{F}}}
\def\Lcal{{\mathcal{L}}}
\def\Dsf{{\mathsf{D}}}
\def\Gsf{{\mathsf{G}}}
\def\Tsf{{\mathsf{T}}}
\def\Tsf{{\mathsf{T}}}
\def\Dsf{{\mathsf{D}}}
\def\Hsf{{\mathsf{H}}}
\def\Jsf{{\mathsf{J}}}
\def\Gsf{{\mathsf{G}}}
\def\Hsf{{\mathsf{H}}}
\def\xbmhat{{\widehat{\bm{x}}}}
\def\zbmhat{{\widehat{\bm{z}}}}
\def\argmin{\mathop{\mathsf{arg\,min}}} 
\newcommand{\norm}[1]{\left\lVert#1\right\rVert}
\title{Block Coordinate Plug-and-Play Methods\\for Blind Inverse Problems}
\date{}
\author{
Weijie Gan, Shirin Shoushtari, Yuyang Hu, Jiaming Liu, \\ Hongyu An, and Ulugbek S.\ Kamilov\\
\small Washington University in St. Louis, MO 63130, USA\\
\small \texttt{\{weijie.gan, s.shirin, h.yuyang, jiaming.liu, hongyuan, kamilov\}@wustl.edu}
}
\begin{document}

\maketitle

\begin{abstract}
Plug-and-play (PnP) prior is a well-known class of methods for solving imaging inverse problems by computing fixed-points of operators combining physical measurement models and learned image denoisers. While PnP methods have been extensively used for image recovery with known measurement operators, there is little work on PnP for solving blind inverse problems. We address this gap by presenting a new block-coordinate PnP (BC-PnP) method that efficiently solves this joint estimation problem by introducing learned denoisers as priors on both the unknown image and the unknown measurement operator. We present a new convergence theory for BC-PnP compatible with blind inverse problems by considering \emph{nonconvex} data-fidelity terms and \emph{expansive} denoisers. Our theory analyzes the convergence of BC-PnP to a stationary point of an \emph{implicit} function associated with an \emph{approximate} minimum mean-squared error (MMSE) denoiser. We numerically validate our method on two blind inverse problems: automatic coil sensitivity estimation in magnetic resonance imaging (MRI) and blind image deblurring. Our results show that BC-PnP provides an efficient and principled framework for using denoisers as PnP priors for jointly estimating measurement operators and images.
\end{abstract}


\section{Introduction}

Many problems in computational imaging, biomedical imaging, and computer vision can be formulated as \emph{inverse problems} involving the recovery of high-quality images from low-quality observations. Imaging inverse problems are generally ill-posed, which means that multiple plausible clean images could lead to the same observation. It is thus common to introduce prior models on the desired images. While the literature on prior modeling of images is vast, current methods are often based on \emph{deep learning (DL)}, where a deep model is trained to map observations to images~\cite{McCann.etal2017, Lucas.etal2018, Ongie.etal2020}. 

\emph{Plug-and-play (PnP) priors}~\cite{Venkatakrishnan.etal2013, Sreehari.etal2016} is one of the most widely-used DL frameworks for solving imaging inverse problems. PnP methods circumvent the need to explicitly describe the full probability density of images by specifying image priors using image denoisers. The integration of state-of-the-art deep denoisers with physical measurement models within PnP has been shown to be effective in a number of inverse problems, including image super-resolution, phase retrieval, microscopy, and medical imaging~\cite{Metzler.etal2018, Zhang.etal2017a, Meinhardt.etal2017, Dong.etal2019, Zhang.etal2019, Wei.etal2020, Zhang.etal2022, Liu.etal2022} (see also recent reviews~\cite{Ahmad.etal2020, Kamilov.etal2023}). Practical success of PnP has also motivated novel extensions, theoretical analyses, statistical interpretations, as well as connections to related approaches such as score matching and diffusion models~\cite{Chan.etal2016, Romano.etal2017, Buzzard.etal2017, Reehorst.Schniter2019, Sun.etal2018a, Sun.etal2019b, Ryu.etal2019, Xu.etal2020, Liu.etal2021b, Kadkhodaie.Simoncelli2021, Cohen.etal2021a, Hurault.etal2022, hurault2022proximal, Laumont.etal2022}.

Despite the rich literature on PnP, the existing work on the topic has primarily focused on the problem of image recovery where the measurement operator is known exactly. There is little work on PnP for \emph{blind} inverse problems, where both the image and the measurement operator are unknown. This form of blind inverse problems are ubiquitous in computational imaging with well-known applications such as blind deblurring~\cite{campisi2017blind} and parallel magnetic resonance imaging (MRI)~\cite{Fessler2020}. In this paper, we address this gap by developing a new PnP approach that uses denoisers as priors over both the unknown measurement model and the unknown image, and efficiently solves the joint estimation task as a \emph{block-coordinate PnP (BC-PnP)} method. While a variant of BC-PnP was proposed in the recent paper~\cite{Sun.etal2019b}, it was never used for jointly estimating the images and the measurement operators. Additionally, the convergence theory in~\cite{Sun.etal2019b} is inadequate for blind inverse problems since it assumes convex data-fidelity terms and nonexpansive denoisers. We present a new convergence analysis applicable to \emph{nonconvex} data-fidelity terms and \emph{expansive} denoisers. Our theoretical analysis provides explicit error bounds on the convergence of BC-PnP for \emph{approximate} minimum mean squared error (MMSE) denoisers under a set of clearly specified assumptions. We show the practical relevance of BC-PnP by solving joint estimation problems in blind deblurring and accelerated parallel MRI. Our numerical results show the potential of denoisers to act as PnP priors over the measurement operators as well as images. Our work thus addresses a gap in the current PnP literature by providing a new efficient and principled framework applicable to a wide variety of blind imaging inverse problems.

All proofs and some details that have been omitted for space appear in the supplementary material.

\section{Background}

\textbf{Inverse Problems.} Many imaging problems can be formulated as inverse problems where the goal is to estimate an unknown image $\xbm\in\R^n$ from its degraded observation $\ybm=\Abm\xbm + \ebm$, where $\Abm\in\R^{m\times n}$ is a measurement operator and $\ebm\in\R^m$ is the noise. A common approach for solving inverse problems is based on formulating an optimization problem
\begin{equation}
    \label{equ:optimization}
    \xbmhat \in \argmin_{\xbm\in\R^n} f(\xbm) \quad\text{with}\quad f(\xbm) = g(\xbm) + h(\xbm)\ ,
\end{equation}
where $g$ is the data-fidelity term that quantifies consistency with the observation $\ybm$ and $h$ is the regularizer that infuses a prior on $\xbm$. For example, a widely-used data-fidelity term and regularizer in computational imaging are the least-squares $g(\xbm)=\frac{1}{2}\norm{\Abm\xbm-\ybm}_2^2$ and the total variation (TV) functions $h(\xbm)=\tau\norm{\Dbm\xbm}_1$, where $\Dbm$ is the image gradient, and $\tau > 0$ a regularization parameter. 

The traditional inverse problem formulations assume that the measurement operator $\Abm$ is known exactly. However, in many applications, it is more practical to model the measurement operator as $\Abm(\thetabm)$, where $\thetabm \in \R^p$ are unknown parameters to be estimated jointly with $\xbm$. This form of inverse problems are often referred to as \emph{blind} inverse problems and arise in a wide-variety of applications, including pralellel MRI~\cite{Ying.Sheng2007, Uecker.etal2008, Jun.etal2021, Holme.etal2019, Arvinte.etal2021, Sriram.etal2020}, blind deblurring~\cite{kundur1996blind, campisi2017blind, pan2017deblurring}, and computed tomography~\cite{malhotra2016tomographic,lee2017phantomless,cheng2018correction,Xie.etal2021}.

\textbf{DL.} There is a growing interest in DL for solving imaging inverse problems~\cite{McCann.etal2017, Lucas.etal2018, Ongie.etal2020}. Instead of explicitly defining a regularizer, DL approaches for solving inverse problems learn a mapping from the measurements to the desired image by training a convolutional neural network (CNN) to perform regularized inversion~\cite{Wang2016.etal, DJin.etal2017, Kang.etal2017, Chen.etal2017, Xu.etal2018a}. Model-based DL (MBDL) has emerged as powerful DL framework for inverse problems that combines the knowledge of the measurement operator with an image prior specified by a CNN (see reviews~\cite{Ongie.etal2020, Monga.etal2021}). The literature of MBDL is vast, but some well-known examples include PnP, regularization by denoising (RED), deep unfolding (DU), compressed sensing using generative models (CSGM), and deep equilibrium models (DEQ)~\cite{Bora.etal2017, zhang2018ista, Hauptmann.etal2018, Gilton.etal2021, Liu.etal2022a}. All these approaches come with different trade-offs in terms of imaging performance, computational and memory complexity, flexibility, need for supervision, and theoretical understanding.

The literature on DL approaches for blind inverse problems is broad, with many specialized methods developed for different applications. While an in-depth review would be impractical for this paper, we mention several representative approaches adopted in prior work. The direct application of DL to predict the measurement operator from the observation was explored in~\cite{han.etal2018, peng2022deepsense}. Deep image prior (DIP) was used as a learning-free prior to regularize the image and the measurement operator in~\cite{bostan2020deep,ren2020neural}. Generative models, including both GANs and diffusion models, have been explored as regularizers for blind inverse problems in~\cite{asim2020blind, Chung.etal2023a}. Other work considered the use of a dedicated neural network to predict the parameters of the measurement operator, adoption of model adaptation strategies, and development of autocalibration methods based on optimization~\cite{Jun.etal2021, Holme.etal2019, Arvinte.etal2021, Sriram.etal2020, hu2022spice, Gossard.Weiss2022, huang2022unrolled,li2020efficient}.

\textbf{PnP.} PnP~\cite{Venkatakrishnan.etal2013, Sreehari.etal2016} is one of the most popular MBDL approaches based on using deep denoisers as imaging priors (see also recent reviews~\cite{Ahmad.etal2020, Kamilov.etal2023}). For example, the proximal gradient method variant of PnP (referred to as PnP-ISTA in this paper) can be formulated as a fixed-point iteration~\cite{Kamilov.etal2017}
\begin{equation}
    \label{equ:pnp}
    \xbm^{k} \leftarrow \Dsf_\sigma\left(\zbm^k\right) \quad\text{with}\quad \zbm^k \leftarrow \xbm^{k-1} - \gamma\nabla g(\xbm^{k-1})\ ,
\end{equation}
where $\Dsf_\sigma$ is a denoiser with a parameter $\sigma > 0$ for controlling its strength and $\gamma > 0$ is a step-size. The theoretical convergence of PnP-ISTA has been explored for convex functions $g$ using monotone operator theory~\cite{Sun.etal2018a, Ryu.etal2019} as well as for nonconvex functions based on interpreting the denoiser as a MMSE estimator~\cite{Xu.etal2020}. The analysis in this paper builds on the convergence theory in~\cite{Xu.etal2020} that uses an elegant formulation by Gribonval~\cite{Gribonval2011} establishing a direct link between MMSE estimation and regularized inversion. Many variants of PnP have been developed over the past few years~\cite{Metzler.etal2018, Zhang.etal2017a, Meinhardt.etal2017, Dong.etal2019, Zhang.etal2019, Wei.etal2020, Zhang.etal2022}, which has motivated an extensive research on its theoretical properties~\cite{Chan.etal2016, Buzzard.etal2017, Ryu.etal2019, Sun.etal2018a, Tirer.Giryes2019, Teodoro.etal2019, Xu.etal2020, Sun.etal2021, Cohen.etal2020, Hurault.etal2022, Laumont.etal2022, hurault2022proximal}.

\emph{Block coordinate regularization by denoising (BC-RED)} is a recent PnP variant for solving large-scale inverse problems by updating along a subset of coordinates at every iteration~\cite{Sun.etal2019b}. BC-RED is based on regularization by denoising (RED), another well-known variant of PnP that seeks to formulate an explicit regularizer for a given image denoiser~\cite{Romano.etal2017, Reehorst.Schniter2019}. BC-RED was applied to several non-blind inverse problems and was theoretically analyzed for convex data-fidelity terms. 

PnP was extended to blind deblurring in~\cite{ljubenovic2017blind, ljubenovic2019plug} by considering an additional prior on blur kernels that promotes sparse and nonegative solutions. PnP was also applied to holography with unknown phase errors by using the Gaussian Markov random field model as the prior for the phase errors~\cite{pellizzari2020coherent}. \emph{Calibrated RED (Cal-RED)}~\cite{Xie.etal2021} is a recent related extension of RED that calibrates the measurement operator during RED reconstruction by combining the traditional RED updates over an unknown image with a gradient descent over the unknown parameters of the measurement operator. However, this prior work does not leverage any learned priors for the measurement operator and does not provide any theoretical analysis.

\textbf{Our contributions.} \textbf{\emph{(1)}} Our first contribution is in the use of learned deep denoisers for regularizing the measurement operators within PnP. While the idea of calibration within PnP was introduced in~\cite{Xie.etal2021}, denoisers were not used as priors for measurement operators. \textbf{\emph{(2)}} Our second contribution is the application of BC-PnP as an efficient method for jointly estimating the unknown image and the measurement operator. While BC-RED was introduced in~\cite{Sun.etal2019b} as a block-coordinate variant of PnP, the method was used for solving non-blind inverse problems by using patch-based image denoisers. \textbf{\emph{(3)}} Our third contribution is a new convergence theory for BC-PnP for the \emph{sequential} and \emph{random} block-selection strategies under approximate MMSE denoisers. Our analysis does \emph{not} assume convex data-fidelity terms, which makes it compatible with blind inverse problems. Our analysis can be seen as an extension of~\cite{Xu.etal2020} to block-coordinate updates and approximate MMSE denoisers. \textbf{\emph{(4)}} Our fourth contribution is the implementation of BC-PnP using learned deep denoisers as priors for two distinct blind inverse problems: blind deblurring and auto-calibrated parallel MRI. Our code---which we share publicly---shows the potential of learning deep denoisers over measurement operators and using them for jointly estimating the uknown image and the uknown measurement operator.

\section{Block Coordinate Plug-and-Play Method}

We propose to efficiently solve blind inverse problems by using a block-coordinate PnP method, where each block represents one group of unknown variables (images, measurement operators, etc). The novelty of our work relative to~\cite{Sun.etal2019b} is in solving blind inverse problems by introducing learned priors on both the unknown image and the uknown measurement operator. Additionally, unlike~\cite{Sun.etal2019b}, our work proposes a fully nonconvex formulation that is more applicable to blind inverse problems.

Consider the decomposition of a vector $\xbm \in \R^n$ into $b \geq 1$ blocks
\begin{equation}
\xbm = (\xbm_1, \cdots, \xbm_b) \in \R^{n_1} \times \cdots \times \R^{n_b} \quad\text{with}\quad n = n_1 + \cdots + n_b.
\end{equation}
For each $i \in \{1, \cdots, b\}$, we define a matrix $\Ubf_i \in \R^{n \times n_i}$ that injects a vector in $\R^{n_i}$ into $\R^n$ and its transpose $\Ubf_i^\Tsf$ that extracts the $i$th block from a vector in $\R^n$. For any $\xbm \in \R^n$, we have
\begin{equation}
\label{Eq:BlockExtraction}
\xbm = \sum_{i = 1}^b \Ubf_i\xbm_i \quad\text{with}\quad \xbm_i = \Ubf_i^\Tsf\xbm \in \R^{n_i}, \quad i = 1, \cdots, b \quad\Leftrightarrow\quad \sum_{i = 1}^b \Ubf_i\Ubf_i^\Tsf = \Ibf. 
\end{equation}
Note that~\eqref{Eq:BlockExtraction} directly implies the norm preservation $\|\xbm\|_2^2 = \|\xbm_1\|_2^2 + \cdots + \|\xbm_b\|_2^2$ for any $\xbm \in \R^n$. We are interested in a block-coordinate algorithm that uses only a subset of operator outputs corresponding to coordinates in some block $i \in \{1, \cdots, b\}$. Hence, for an operator $\Gsf: \R^n \rightarrow \R^n$, we define the block-coordinate operator $\Gsf_i: \R^n \rightarrow \R^{n_i}$ as
\begin{equation}
\Gsf_i(\xbm) \defn [\Gsf(\xbm)]_i = \Ubf_i^\Tsf \Gsf(\xbm) \in \R^{n_i}, \quad \xbm \in \R^n.
\end{equation}
We are in-particular interested in two operators: (a) the gradient $\nabla g(\xbm) = (\nabla_1 g(\xbm), \cdots, \nabla_b g(\xbm))$ of the data-fidelity term $g$ and (b) the denoiser $\Dsf_\sigmabm(\xbm) = (\Dsf_{\sigma_1}(\xbm_1), \cdots, \Dsf_{\sigma_b}(\xbm_b))$, where the vector $\sigmabm = (\sigma_1, \cdots, \sigma_b) \in \R_+^b$ consists of parameters for controling the strength of each block denoiser. Note how the denoiser acts in a separable fashion across different blocks.

\begin{algorithm}[t]
\caption{Block Coordinate Plug-and-Play Method (BC-PnP)}
\label{Alg:BCRED}
\begin{algorithmic}[1]
\STATE \textbf{input: } initial value $\xbm^0 \in \R^n$, parameters $\sigmabm \in \R_+^b$, and step-size $\gamma > 0$.
\FOR{$k = 1, 2, 3, \cdots$}
\STATE Choose an index $i_k \in \{1, \cdots, b\}$
\STATE $\xbm^k \leftarrow \xbm^{k-1} - \gamma \Ubf_{i_k} \Gsf_{i_k}(\xbm^{k-1})$\\ 
\quad where  $\Gsf_i(\xbm) \defn \Ubf_i^\Tsf\Gsf(\xbm)$ with $\Gsf(\xbm) \defn \frac{1}{\gamma}(\xbm - \Dsf_\sigmabm (\xbm - \gamma \nabla g(\xbm)))$.
\ENDFOR
\end{algorithmic}
\end{algorithm} 

\medskip\noindent
When $b = 1$, we have $\Ubf_1 = \Ubf_1^\Tsf = \Ibf$ and BC-PnP reduces to the conventional PnP-ISTA~\cite{Kamilov.etal2017, Xu.etal2020}. When $b > 1$, we have at least two blocks with BC-PnP updating only one block at a time
\begin{equation}
\label{Eq:BCPnPUpdate}
\xbm_j^k =
\begin{cases}
\xbm_j^{k-1} & \quad \text{when } j \neq i_k \\
\Dsf_{\sigma_j}(\xbm_j^{k-1}-\gamma \nabla_j g(\xbm^{k-1})) & \quad \text{when } j = i_k 
\end{cases},
\quad j \in \{1, \cdots, b\}.
\end{equation}

As with any coordinate descent method (see~\cite{Wright2015} for a review), BC-PnP can be implemented using different block selection strategies. One common strategy is to simply update blocks sequentially as $i_k = 1+\mod(k-1, b)$, where $\mod(\cdot)$ denotes the modulo operator. An alternative is to proceed in epochs of $b$ consecutive iterations, where at the start of each epoch the set $\{1, \cdots, b\}$ is reshuffled, and $i_k$ is then selected consecutively from this ordered set. Finally, one can adopt a fully randomized strategy where indices $i_k$ are selected as i.i.d.~random variables distributed uniformly over $\{1, \cdots, b\}$. 

Throughout this work, we will assume that each denoiser $\Dsf_{\sigma_i}$ is an \emph{approximate} MMSE estimator for the following AWGN denoising problem
\begin{equation}
\label{Eq:MMSEDenoisingProb}
\zbm_i = \xbm_i + \nbm_i \quad\text{with}\quad \xbm_i \sim p_{\xbm_i}, \quad \nbm_i \sim \mathcal{N}(0, \sigma_i^2 \Ibf),
\end{equation}
where $i \in \{1, \cdots, b\}$ and $\zbm_i \in \R^{n_i}$. We rely only on an \emph{approximation} of the MMSE estimator of $\xbm_i$ given $\zbm_i$, since the \emph{exact} MMSE denoiser corresponds to the generally intractable posterior mean
\begin{equation}
\label{Eq:MMSEDenoiser}
\Dsf_{\sigma_i}^\ast(\zbm_i) \defn \E[\xbm_i | \zbm_i] = \int_{\R^{n_i}} \xbm p_{\xbm_i | \zbm_i} (\xbm | \zbm_i) \d \xbm.
\end{equation}
Approximate MMSE denoisers are a useful model for denoisers due to the use of the MSE loss 
\begin{equation}
\label{Eq:MSELoss}
\Lcal(\Dsf_{\sigma_i}) = \E \left[ \|\xbm_i - \Dsf_{\sigma_i}(\zbm_i)\|_2^2 \right]
\end{equation}
for training deep denoisers, as well as the optimality of MMSE denoisers with respect to widely used image-quality metrics such as signal-to-noise ratio (SNR). 

As a simple illustration of the generality of BC-PnP, consider $b = 2$ with the least-squares objective
\begin{equation}
\label{Eq:BCDataFidelity}
g(\xbm) = \frac{1}{2}\|\ybm - \Abm(\thetabm)\vbm\|_2^2 \quad\text{with}\quad \xbm \defn (\vbm, \thetabm),
\end{equation}
where $\vbm \in \R^{n_1}$ denotes the unknown image and $\thetabm \in \R^{n_2}$ denotes the unknown parameters of the measurement operator. BC-PnP can then be implemented by first pre-training a dedicated AWGN denoiser $\Dsf_{\sigma_i}$ for each block $i$ and using it as a prior within Algorithm~\ref{Alg:BCRED}. It is also worth noting that the functions $g$ in~\eqref{Eq:BCDataFidelity} is nonconvex with respect to the variable $\xbm \in \R^n$. In the next section, we present the full convergence analysis of BC-PnP without any convexity assumptions on $g$ and nonexpansiveness assumptions on the denoiser $\Dsf_\sigmabm$.


\section{Convergence Analysis of BC-PnP}
\label{Sec:ConvergenceAnalysis}

In this section, we present two new theoretical convergence results for BC-PnP. We first discuss its convergence under the sequential updates and then under fully random updates. It is worth mentioning that BC-RED with fully random updates was theoretically analyzed in~\cite{Sun.etal2019b}. The novelty of our analysis here lies in that it allows for nonconvex functions $g$ and expansive denoisers $\Dsf_{\sigma_i}$. The nonconvexity of $g$ is essential since most data-fidelity terms used for blind inverse problems are nonconvex. On the other hand, by allowing expansive $\Dsf_{\sigma_i}$, our analysis avoids the need for the spectral normalization techniques that were previously suggested for PnP methods~\cite{Ryu.etal2019, Sun.etal2019b}. 

\medskip\noindent 
In the following, we will denote as $\Dsf_{\sigmabm}^\ast \defn (\Dsf^\ast_{\sigma_1}, \cdots, \Dsf^\ast_{\sigma_b})$ the exact MMSE denoiser in~\eqref{Eq:MMSEDenoiser}. Our analysis will require five assumptions that will serve as sufficient conditions for our theorems.
\begin{assumption}
\label{As:NonDegenerate}
The blocks $\xbm_i$ are independent with non-degenerate priors $p_{\xbm_i}$ over $\R^{n_i}$.
\end{assumption}
As a reminder, a probability distribution $p_{\xbm_i}$ is \emph{degenerate} over $\R^{n_i}$, if it is supported on a space of lower dimensions than $n_i$. Assumption~\ref{As:NonDegenerate} is required for establishing an explicit link between the MMSE denoiser~\eqref{Eq:MMSEDenoiser} and the following regularizer (see also~\cite{Gribonval2011, Xu.etal2020} for additional background)
\begin{equation}
\label{Eq:FullReg}
h(\xbm) = \sum_{i = 1}^b h_i(\xbm_i), \quad \xbm = (\xbm_1, \cdots, \xbm_n) \in \R^n,
\end{equation}
where each function $h_i$ is defined as (see the derivation in Section~\ref{Sup:Sec:MMSEBackground} of the supplement)
\begin{equation}
\label{Eq:ExpReg}
h_i (\xbm_i) \defn
\begin{cases}
-\frac{1}{2\gamma}\|\xbm_i - (\Dsf_{\sigma_i}^\ast)^{-1}(\xbm_i)\|_2^2 + \frac{\sigma_i^2}{\gamma} h_{\sigma_i}((\Dsf^\ast_{\sigma_i})^{-1}(\xbm_i)) & \text{for } \xbm_i \in \Im(\Dsf_{\sigma_i}^\ast) \\
+\infty & \text{for } \xbm_i \notin \Im(\Dsf_{\sigma_i}^\ast),
\end{cases}
\end{equation}
where $\gamma > 0$ is the step size, $(\Dsf_{\sigma_i}^\ast)^{-1}: \Im(\Dsf_{\sigma_i}^\ast) \rightarrow \R^{n_i}$ is the inverse mapping, which is well defined and smooth over $\Im(\Dsf_{\sigma_i}^\ast)$, and $h_{\sigma_i}(\cdot) \defn -\log(p_{\zbm_i}(\cdot))$, where $p_{\zbm_i}$ is the probability distribution over the AWGN corrupted observations~\eqref{Eq:MMSEDenoisingProb}. Note that the function $h_i$ is smooth for any $\xbm_i \in \Im(\Dsf_{\sigma_i}^\ast)$, which is the consequence of the smoothness of both $(\Dsf_{\sigma_i}^\ast)^{-1}$ and $h_{\sigma_i}$.

\begin{assumption}
\label{As:LipschitzDataFit}
The function $g$ is continuously differentiable and $\nabla g$ is Lipschitz continuous with constant $L > 0$. Additionally, each block gradient $\nabla_i g$ is block Lipschitz continuous with constant $L_i > 0$. We define the largest block Lipschitz constant as $\Lmax \defn \max\{L_1, \cdots, L_b\}$.
\end{assumption}
Lipschitz continuity of the gradient $\nabla g$ is a standard assumption in the context of imaging inverse problems. Note that we always have the relationship $(L/b) \leq \Lmax \leq L$ (see Section~3.2 in~\cite{Wright2015}).
\begin{assumption}
\label{As:BoundedFromBelow}
The explicit data-fidelity term and the implicit regularizer are bounded from below
\begin{equation}
\inf_{\xbm \in \R^n} g(\xbm) > -\infty, \quad \inf_{\xbm \in \R^n} h(\xbm) > -\infty.
\end{equation}
\end{assumption}
Assumption~\ref{As:BoundedFromBelow} implies that there exists $f^\ast > -\infty$ such that $f(\xbm) \geq f^\ast$ for all $\xbm \in \R^n$. Since Assumptions~\ref{As:NonDegenerate}-\ref{As:BoundedFromBelow} correspond to the standard assumptions used in the literature, they are broadly satisfied in the context of inverse problems.

\medskip\noindent
Our analysis assumes that at every iteration, BC-PnP uses inexact MMSE denoisers on each block. While there are several ways to specify the nature of ``inexactness,'' we consider the case where at every iteration $k$ of BC-PnP the distance of the output of $\Dsf_{\sigma_i}$ to $\Dsf^\ast_{\sigma_i}$ is bounded by a constant $\varepsilon_k$.
\begin{assumption}
\label{As:InexactDistance}
Each block denoiser $\Dsf_{\sigma_i}$ in $\Dsf_\sigmabm$ satisfies
\begin{equation*}
\|\Dsf_{\sigma_i}(\zbm_i^k)-\Dsf^\ast_{\sigma_i}(\zbm_i^k)\|_2 \leq \varepsilon_k, \quad i \in \{1, \cdots, b\}, \quad k = 1, 2, 3, \cdots,
\end{equation*}
where $\Dsf_{\sigma_i}^\ast$ is given in~\eqref{Eq:MMSEDenoiser} and $\zbm_i^k = \xbm_i^{k-1}-\gamma \nabla_i g(\xbm^{k-1})$.
\end{assumption}
For convenience, we will define quantities $\varepsilon^2 \defn \max\{\varepsilon^2_1, \varepsilon^2_2, \cdots\}$ and $\varepsilonbar_t^2 \defn (1/t) \left(\varepsilon_1^2 + \cdots + \varepsilon_t^2\right)$ that correspond to the largest and the mean squared-distances between the inexact and exact denoisers. Assumption~\ref{As:InexactDistance} states that the error of the approximate MMSE denoiser used for inference is bounded relative to the exact MMSE denoiser, which is reasonable when the approximate MMSE denoiser is a CNN trained to minimize the MSE.

\medskip\noindent
It has been shown in the prior work~\cite{Gribonval2011, Xu.etal2020} that the function $h$ in~\eqref{Eq:FullReg} is infinitely continuously differentiable over $\Im(\Dsf^\ast_\sigmabm)$. Our analysis requires the extension of the region where $h$ is smooth to include the range of the approximate MMSE denoiser, which is the goal of our next assumption.
\begin{assumption}
\label{As:LipschitzPrior}
Each regularizer $h_i$ in~\eqref{Eq:ExpReg} associated with the MMSE denoiser~\eqref{Eq:MMSEDenoiser} is continously differentiable and has a Lipschitz continuous gradient with constant $M_i > 0$ over the set
\begin{equation*}
\Im_\varepsilon(\Dsf^\ast_{\sigma_i}) \defn \{\xbm \in \R^{n_i} : \|\xbm-\Dsf_{\sigma_i}^\ast(\zbm)\|_2 \leq \varepsilon, \, \zbm \in \R^{n_i} \}, \quad i \in \{1, \cdots, b\}.
\end{equation*}
\end{assumption}
We will define $\Mmax \defn \max\{M_1, \cdots, M_b\}$ to be the largest Lipschitz constant and $\Im_\varepsilon(\Dsf^\ast_\sigmabm) \defn \{\xbm \in \R^n : \xbm_i \in \Im_\varepsilon(\Dsf^\ast_{\sigma_i}),\, i \in \{1, \dots, b\} \}$ to be the set over which $h$ is smooth. Assumption~\ref{As:LipschitzPrior} expands the region where the regularizer associated with the exact MMSE denoiser is smooth by including the range of the approximate MMSE denoiser. For example, this assumption is automatically true when the exact and approximate MMSE denoisers have the same range, which is reasonable when the approximate MMSE denoiser is trained to imitate the exact one.

\medskip\noindent
Our first theoretical result considers the sequential updates, where at each iteration, $i_k$ is selected as $i_k = 1+\mod(k-1, b)$ with $\mod(\cdot)$ being the modulo operator. We can then express any iterate $ib$ produced by BC-PnP for $i \geq 1$ as
\begin{equation*}
\xbm^{ib} = (\xbm_1^{ib}, \cdots, \xbm_b^{ib}) = (\xbm_1^{(i-1)b+1}, \cdots, \xbm_b^{ib}).
\end{equation*}
Note that $\xbm^{ib} \in \Im_\varepsilon(\Dsf^\ast_\sigmabm)$ since each block is an output of the denoiser. We prove the following result.
\begin{theorem}
\label{Thm:SeqConv}
Run BC-PnP under Assumptions~\ref{As:NonDegenerate}-\ref{As:LipschitzPrior} using the sequential block selection and the step $0 < \gamma < 1/\Lmax$. Then, we have
\begin{equation*}
\min_{1 \leq i \leq t} \|\nabla f(\xbm^{ib})\|_2^2 \leq  \frac{1}{t}\sum_{i = 1}^t \|\nabla f(\xbm^{ib}) \|_2^2 \leq \frac{C_1}{t}(f(\xbm^0)-f^\ast) + C_2 \varepsilonbar_{tb}^2,
\end{equation*}
where $C_1 > 0$ and $C_2 > 0$ are iteration independent constants. If additionally the sequence of error terms $\{\varepsilon_i\}_{i \geq 1}$ is square-summable, we have that $\nabla f(\xbm^{tb}) \rightarrow \zerobm$ as $t \rightarrow 0$.
\end{theorem}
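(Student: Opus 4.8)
The plan is to combine the proximal/stationarity characterization of the exact MMSE denoiser with a block-coordinate sufficient-decrease argument, and then to control the full gradient at the end of each epoch in terms of the per-block movements accumulated during that epoch.

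First I would recall from the Gribonval--Xu theory (using Assumption~\ref{As:NonDegenerate}) that each exact block denoiser is the proximal map of $h_i$ in~\eqref{Eq:ExpReg}, so that $\xbm_i^{k,\ast} \defn \Dsf^\ast_{\sigma_i}(\zbm_i^k)$ satisfies both the stationarity identity $\nabla h_i(\xbm_i^{k,\ast}) = \tfrac{1}{\gamma}(\xbm_i^{k-1}-\xbm_i^{k,\ast}) - \nabla_i g(\xbm^{k-1})$ and the minimization comparison $\langle \nabla_i g(\xbm^{k-1}), \xbm_i^{k,\ast}-\xbm_i^{k-1}\rangle + \tfrac{1}{2\gamma}\|\xbm_i^{k,\ast}-\xbm_i^{k-1}\|_2^2 + h_i(\xbm_i^{k,\ast}) \le h_i(\xbm_i^{k-1})$. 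Writing the actual iterate as $\xbm_i^k = \xbm_i^{k,\ast}+\ebm_i^k$ with $\|\ebm_i^k\|_2\le\varepsilon_k$ (Assumption~\ref{As:InexactDistance}), I would establish a per-iteration bound $f(\xbm^k) \le f(\xbm^{k-1}) - c\,\|\xbm_i^{k,\ast}-\xbm_i^{k-1}\|_2^2 + c'\varepsilon_k^2$. The decrease for the \emph{exact} step follows by adding the block descent lemma for $g$ (constant $L_{i_k}$, Assumption~\ref{As:LipschitzDataFit}) to the comparison inequality, producing the coefficient $\tfrac{1}{2\gamma}-\tfrac{L_{i_k}}{2}$, which the step $0<\gamma<1/\Lmax$ renders strictly positive. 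The gap between $f(\xbm^k)$ and $f(\xbm^{k,\ast})$ is then absorbed using the smoothness of $h_i$ on $\Im_\varepsilon(\Dsf^\ast_{\sigma_i})$ (Assumption~\ref{As:LipschitzPrior}) together with a Young's inequality whose free parameter is taken small enough that $\Mmax$ enters only the error constant $c'$ and not the step-size condition. Telescoping over $k=1,\dots,tb$ and invoking Assumption~\ref{As:BoundedFromBelow} gives $\sum_{k=1}^{tb}\|\xbm_{i_k}^{k,\ast}-\xbm_{i_k}^{k-1}\|_2^2 \le \tfrac{1}{c}(f(\xbm^0)-f^\ast) + \tfrac{c'}{c}\sum_{k=1}^{tb}\varepsilon_k^2$.

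The second ingredient, and the step I expect to be the main obstacle, is bounding $\|\nabla f(\xbm^{ib})\|_2^2$ at the end of each epoch. For sequential selection, block $j$ is updated exactly once per epoch at iteration $k=(i-1)b+j$ and is frozen thereafter, so $\xbm_j^{ib}=\xbm_j^{k}$; substituting the stationarity identity I would decompose $\nabla_j f(\xbm^{ib})$ into a \emph{staleness} term $\nabla_j g(\xbm^{ib})-\nabla_j g(\xbm^{k-1})$, the scaled exact movement $-\tfrac{1}{\gamma}(\xbm_j^{k,\ast}-\xbm_j^{k-1})$, and a denoiser-error term $\nabla h_j(\xbm_j^{k})-\nabla h_j(\xbm_j^{k,\ast})$. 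The staleness term is the genuinely new difficulty relative to the single-block analysis of~\cite{Xu.etal2020}: since $\xbm^{ib}$ and $\xbm^{k-1}$ differ in every block updated at or after step $k$, I would bound it by the \emph{global} Lipschitz constant $L$ (Assumption~\ref{As:LipschitzDataFit}) times $\|\xbm^{ib}-\xbm^{k-1}\|_2$, and then use the norm preservation in~\eqref{Eq:BlockExtraction} to rewrite the latter as a sum of within-epoch block movements. The error term is $\Mmax$-controlled by Assumption~\ref{As:LipschitzPrior}. Squaring, summing over $j$, and counting how often each block movement recurs yields $\|\nabla f(\xbm^{ib})\|_2^2 \le C_3\sum_{k\in\text{epoch }i}\|\xbm_{i_k}^{k,\ast}-\xbm_{i_k}^{k-1}\|_2^2 + C_4\sum_{k\in\text{epoch }i}\varepsilon_k^2$.

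Finally I would chain the two estimates: summing the epoch-wise gradient bound over $i=1,\dots,t$ and substituting the telescoped movement bound produces $\sum_{i=1}^t\|\nabla f(\xbm^{ib})\|_2^2 \le C_1(f(\xbm^0)-f^\ast) + C_2'\sum_{k=1}^{tb}\varepsilon_k^2$ for iteration-independent $C_1,C_2'>0$; dividing by $t$ and using $\tfrac1t\sum_{k=1}^{tb}\varepsilon_k^2 = b\,\varepsilonbar_{tb}^2$ gives the stated bound with $C_2 = b\,C_2'$, the left inequality being the trivial $\min_{i}\le\mathrm{avg}_i$. For the final claim, square-summability of $\{\varepsilon_i\}$ makes the right-hand side of the \emph{un-averaged} inequality uniformly bounded in $t$, so $\sum_{i=1}^\infty\|\nabla f(\xbm^{ib})\|_2^2<\infty$; the elementary fact that the terms of a convergent nonnegative series vanish then yields $\nabla f(\xbm^{tb})\to\zerobm$ (the stated $t\to 0$ reading as $t\to\infty$).
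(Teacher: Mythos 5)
Your proposal is correct and follows essentially the same route as the paper's proof: a per-iteration sufficient-decrease lemma with an $O(\varepsilon_k^2)$ error (the paper's Lemma~\ref{Lem:SingleUpdateInexact}, telescoped over an epoch in Lemma~\ref{Lem:FullBlockUpdateIterInexact}), combined with the prox stationarity identity for the exact denoiser to express $\nabla f$ at the end of each epoch as a scaled within-epoch movement plus a gradient-staleness term bounded via the global Lipschitz constant $L$ and a denoiser-error term bounded via $\Mmax$ on $\Im_\varepsilon(\Dsf^\ast_\sigmabm)$. The only cosmetic differences are that you track the exact-denoiser movements $\|\xbm_{i_k}^{k,\ast}-\xbm_{i_k}^{k-1}\|_2$ where the paper tracks the actual ones (they differ by at most $\varepsilon_k$), and you assemble the gradient bound block by block rather than first evaluating $\nabla f$ at the exact-step epoch iterate $\ubm^k$ and then transferring to $\vbm^k$.
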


Our second theoretical result considers fully random updates, where at each iteration, $i_k$ is selected as i.i.d.\ random variables distributed over $\{1, \cdots, b\}$. In this setting, we analyze the convergence of BC-PnP in terms of the sequence $\{\Gsf(\xbm^k)\}_{k \geq 0}$. Note that it is straightforward to verify that $\Zer(\Gsf) = \Zer(\nabla f)$, which makes this analysis meaningful. We prove the following result.
\begin{theorem}
\label{Thm:RandomConv}
Run BC-PnP under Assumptions~\ref{As:NonDegenerate}-\ref{As:LipschitzPrior} using the random i.i.d.\ block selection and the step $0 < \gamma < 1/\Lmax$. Then, we have
\begin{equation*}
\min_{1 \leq k \leq t} \E\left[\|\Gsf(\xbm^{k-1})\|_2^2\right] \leq \E \left[\frac{1}{t}\sum_{k = 1}^t \|\Gsf(\xbm^{k-1}) \|_2^2 \right] \leq \frac{D_1}{t}(f(\xbm^0)-f^\ast) + D_2 \varepsilonbar_t^2,
\end{equation*}
where $D_1 > 0$ and $D_2 > 0$ are iteration independent constants. If additionally the sequence of error terms $\{\varepsilon_i\}_{i \geq 1}$ is square-summable, we have that $\Gsf(\xbm^t) \asarrow \zerobm$ as $t \rightarrow \infty$.
\end{theorem}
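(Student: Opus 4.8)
The plan is to first establish a \emph{per-iteration sufficient-decrease} inequality that holds deterministically for whichever block $i_k$ is chosen, and then to exploit the uniform randomness of $i_k$ to convert that single-block decrease into a decrease of the full objective $f$ in conditional expectation. The starting point is the characterization guaranteed by Assumption~\ref{As:NonDegenerate} and recorded in~\eqref{Eq:ExpReg}, namely that the \emph{exact} block denoiser is a proximal operator of the implicit regularizer. Writing $\widehat\xbm_i \defn \Dsf^\ast_{\sigma_i}(\zbm_i^k)$ with $\zbm_i^k = \xbm_i^{k-1} - \gamma\nabla_i g(\xbm^{k-1})$, the optimality condition of $\widehat\xbm_i = \prox_{\gamma h_i}(\zbm_i^k)$ reads $\nabla h_i(\widehat\xbm_i) = \tfrac1\gamma(\zbm_i^k - \widehat\xbm_i)$, i.e.\ $\nabla_i g(\xbm^{k-1}) + \nabla h_i(\widehat\xbm_i) = \tfrac1\gamma(\xbm_i^{k-1} - \widehat\xbm_i)$. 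This is the identity that lets the (nonconvex) regularizer gradient be eliminated in favour of the step itself.

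Next I would prove the descent lemma. Since only block $i_k$ moves and $h$ is separable, I can bound $g$ and $h_{i_k}$ separately. Applying the block descent lemma for $g$ (constant $L_{i_k}$, Assumption~\ref{As:LipschitzDataFit}) and for $h_{i_k}$ (constant $M_{i_k}$ on the enlarged smooth set $\Im_\varepsilon(\Dsf^\ast_{\sigma_{i_k}})$, Assumption~\ref{As:LipschitzPrior}, using that successive block iterates are denoiser outputs and hence lie in $\Im_\varepsilon$), and substituting the identity above so that the cross terms cancel, I obtain for the \emph{exact} update $\widehat\xbm$ a bound of the form $f(\widehat\xbm)\le f(\xbm^{k-1}) - (\tfrac1\gamma - \tfrac{L_{i_k}+M_{i_k}}{2})\|\widehat\xbm_{i_k}-\xbm^{k-1}_{i_k}\|_2^2$, whose coefficient is positive under the step-size restriction on $\gamma$ relative to $\Lmax$ and $\Mmax$. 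I would then transfer this from $\widehat\xbm_i$ to the actual iterate $\xbm_i^k = \Dsf_{\sigma_i}(\zbm_i^k)$ via Assumption~\ref{As:InexactDistance} ($\|\xbm_i^k - \widehat\xbm_i\|_2\le\varepsilon_k$): a further application of the descent lemmas over the perturbation $\delta_i = \xbm_i^k-\widehat\xbm_i$, together with the estimate $\|\nabla_i g(\widehat\xbm)+\nabla h_i(\widehat\xbm_i)\|_2 \le (1+\gamma L_{i_k})\|\Gsf^\ast_{i_k}(\xbm^{k-1})\|_2$ (from the identity plus Lipschitzness of $\nabla_i g$) and Young's inequality, absorbs the resulting cross term into the descent and leaves only a quadratic error. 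The outcome is $f(\xbm^k)\le f(\xbm^{k-1}) - c_1\|\Gsf^\ast_{i_k}(\xbm^{k-1})\|_2^2 + c_2\varepsilon_k^2$ with $c_1,c_2>0$, where $\Gsf^\ast_i$ is the exact counterpart of $\Gsf_i$ and $\|\Gsf_i-\Gsf^\ast_i\|_2\le\varepsilon_k/\gamma$.

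To handle the random selection I take the conditional expectation over $i_k\sim\mathrm{Unif}\{1,\dots,b\}$ given $\xbm^{k-1}$. Because the blocks are equiprobable and $\|\Gsf^\ast(\xbm)\|_2^2=\sum_{i}\|\Gsf^\ast_i(\xbm)\|_2^2$ by the norm decomposition in~\eqref{Eq:BlockExtraction}, averaging the single-block decrease gives $\E[f(\xbm^k)\mid\xbm^{k-1}]\le f(\xbm^{k-1}) - \tfrac{c_1}{b}\|\Gsf^\ast(\xbm^{k-1})\|_2^2 + c_2\varepsilon_k^2$. Taking total expectations, telescoping over $k=1,\dots,t$, and using $f(\xbm^t)\ge f^\ast$ (Assumption~\ref{As:BoundedFromBelow}) yields $\tfrac1t\sum_{k=1}^t\E\|\Gsf^\ast(\xbm^{k-1})\|_2^2 \le \tfrac{b}{c_1 t}(f(\xbm^0)-f^\ast) + \tfrac{b c_2}{c_1}\varepsilonbar_t^2$; converting $\Gsf^\ast$ back to $\Gsf$ through $\|\Gsf-\Gsf^\ast\|_2^2\le b\varepsilon_k^2/\gamma^2$ produces precisely the stated bound with suitable $D_1,D_2>0$, and the minimum is dominated by the average. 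For the almost-sure claim, when $\{\varepsilon_i\}$ is square-summable I would apply the Robbins--Siegmund supermartingale theorem to the nonnegative sequence $f(\xbm^k)-f^\ast$ with summable perturbations $c_2\varepsilon_k^2$: this gives $\sum_k\|\Gsf^\ast(\xbm^{k-1})\|_2^2<\infty$ almost surely, hence $\Gsf^\ast(\xbm^k)\to\zerobm$ a.s., and since square-summability forces $\varepsilon_k\to0$ we conclude $\Gsf(\xbm^t)\asarrow\zerobm$.

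The main obstacle is the per-iteration descent step in the nonconvex, inexact setting. Unlike the convex analysis of~\cite{Sun.etal2019b}, I cannot rely on nonexpansiveness of the denoiser, so I must control the nonconvex regularizer purely through its \emph{local} Lipschitz-gradient bound on $\Im_\varepsilon$ — which requires verifying that the relevant iterates, and ideally the segments joining them, remain in this smooth region — and I must ensure the denoiser inexactness enters only at second order in $\varepsilon_k$, which is what matches the error to $\varepsilonbar_t^2$ rather than $\varepsilonbar_t$, by carefully pairing the Cauchy--Schwarz/Young estimate against a fraction of the guaranteed descent. Getting the step-size window and the constants $c_1,c_2$ (hence $D_1,D_2$) to be simultaneously positive and finite is the delicate bookkeeping on which the whole argument rests.
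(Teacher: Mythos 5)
Your overall architecture --- a per-block sufficient-decrease inequality, averaged over the uniform choice of $i_k$ to turn it into a conditional-expectation decrease of $f$ proportional to $\|\Gsf(\xbm^{k-1})\|_2^2/b$, followed by telescoping against $f\ge f^\ast$ and the Robbins--Siegmund supermartingale theorem --- is exactly the paper's. The gap is in how you establish the per-iteration descent. You derive it by applying the descent lemma to $h_{i_k}$ with constant $M_{i_k}$ and cancelling cross terms via the prox optimality identity, which yields the coefficient $\tfrac{1}{\gamma}-\tfrac{L_{i_k}+M_{i_k}}{2}$; as you yourself note, this is positive only under a step-size restriction involving $\Mmax$. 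But the theorem asserts the bound for every $0<\gamma<1/\Lmax$ with no condition on $\Mmax$, so whenever $\Mmax$ exceeds roughly $\Lmax$ your inequality fails to be a descent for admissible $\gamma$ and the argument does not prove the stated result. The paper avoids this by never invoking smoothness of $h_{i_k}$ in the descent step: since $\Dsf^\ast_{\sigma_{i_k}}=\prox_{\gamma h_{i_k}}$, the exact update $\xbm^\ast_{i_k}$ \emph{globally minimizes} $\varphi_{i_k}(\ubm)=\tfrac{1}{2\gamma}\|\ubm-\zbm_{i_k}\|_2^2+h_{i_k}(\ubm)$, and the bare comparison $\varphi_{i_k}(\xbm^\ast_{i_k})\le\varphi_{i_k}(\xbm^{k-1}_{i_k})$ already gives $h_{i_k}(\xbm^\ast_{i_k})\le h_{i_k}(\xbm^{k-1}_{i_k})-\nabla_{i_k}g(\xbm^{k-1})^\Tsf(\xbm^\ast_{i_k}-\xbm^{k-1}_{i_k})-\tfrac{1}{2\gamma}\|\xbm^\ast_{i_k}-\xbm^{k-1}_{i_k}\|_2^2$ with no $M_{i_k}$ at all; combined with the descent lemma for $g$ this produces the coefficient $\tfrac{1}{2\gamma}-\tfrac{\Lmax}{2}>0$ for every $\gamma<1/\Lmax$. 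The constant $\Mmax$ then enters only through the Lipschitz constant $\lambda=\alpha\Lmax+\Mmax$ of $\nabla\varphi_{i_k}$, used locally between $\xbm^\ast_{i_k}$ and $\xbm^k_{i_k}$ (two points at distance at most $\varepsilon_k$), where the first-order term vanishes because $\nabla\varphi_{i_k}(\xbm^\ast_{i_k})=\zerobm$; the inexactness therefore contributes only $\tfrac{\lambda}{2}\varepsilon_k^2$ --- the second-order error you were aiming for, obtained without Young's inequality and without eating into the descent constant. This route also largely sidesteps the segment-containment problem you flag, since Lipschitzness of the gradient is only needed in an $\varepsilon$-neighborhood of the denoiser's range rather than along the long segment from $\xbm^{k-1}_{i_k}$ to $\xbm^\ast_{i_k}$.

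A secondary, non-fatal difference: the paper's descent is expressed in terms of $\|\xbm^k_{i_k}-\xbm^{k-1}_{i_k}\|_2^2=\gamma^2\|\Gsf_{i_k}(\xbm^{k-1})\|_2^2$, i.e.\ directly in the \emph{inexact} residual, so averaging over $i_k$ immediately produces $\tfrac{\gamma^2}{b}\|\Gsf(\xbm^{k-1})\|_2^2$ and no conversion between $\Gsf^\ast$ and $\Gsf$ (with its extra $b\varepsilon_k^2/\gamma^2$ bookkeeping) is needed at the end. Your detour through $\Gsf^\ast$ is harmless but unnecessary.
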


The expressions for the constants in Theorems~\ref{Thm:SeqConv} and~\ref{Thm:RandomConv} are given in the proofs. The theorems show that if the sequence of approximation errors is square-summable, BC-PnP asymptotically achieves a stationary point of $f$. On the other hand, if the sequence of approximation errors is not square-summable, the convergence is only up to an error term that depends on the average of the squared approximation errors. Both theorems can thus be viewed as more flexible alternatives for the convergence analysis in~\cite{Sun.etal2019b}. 
It is also worth mentioning that the theorems are interesting even when the denoiser errors are not square-summable, since they provide explicit error bounds on convergence.
While the analysis in~\cite{Sun.etal2019b} assumes convex $g$ and nonexpansive $\Dsf_\sigmabm$, the analysis here does not require these two assumptions. It instead views the denoiser $\Dsf_\sigmabm$ as an \emph{approximation} to the MMSE estimator $\Dsf_\sigmabm^\ast$, where the approximation error is bounded by $\varepsilon_k$ at every iteration of BC-PnP. This view is compatible with denoisers trained to minimize the MSE loss~\eqref{Eq:MSELoss}. 

\begin{figure}[t]
    \centering
    \includegraphics[width=\textwidth]{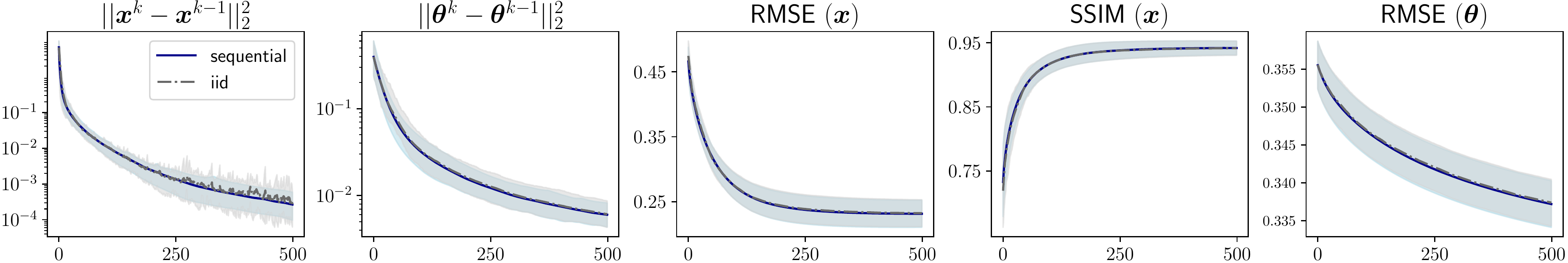}
    \vspace{-0.25cm}
    \caption{~\emph{Illustration of the BC-PnP convergence using the sequential and random i.i.d.\ block selection rules on CS-PMRI with the sampling factor $R=8$. Leftmost two plots: Evolution of the distance between two consecutive image and CSM iterates. Rightmost three plots: Evolution of the RMSE and SSIM metrics relative to the true solutions across BC-PnP iterations. Note how both block selection rules lead to a nearly identical convergence behaviour of BC-PnP in this experiment.}}
    \label{fig:exp_pmri_convergence}
\end{figure}

\section{Numerical Validation}

We numerically validate BC-PnP on two blind inverse problems: (a) \emph{compressed sensing parallel MRI (CS-PMRI)} with automatic coil sensitivity map (CSM) estimation and (b) \emph{blind image deblurring}. We adopt the traditional $\ell_2$-norm loss in~\eqref{Eq:BCDataFidelity} as the data-fidelity term for both problems. We will use $\xbm$ to denote the unknown image and $\thetabm$ to denote the unknown parameters of the measurement operator. We use the relative root mean squared error (RMSE) and structural similarity index (SSIM) as quantitative metrics to evaluate the performance.

We experimented with several ablated variants of BC-PnP, including PnP, PnP-GD$_\thetabm$, and PnP-oracle$_\thetabm$. 
PnP and PnP-oracle$_\thetabm$ denote basic variants of PnP that use pre-estimated and ground truth measurement operators, respectively. PnP-GD$_\thetabm$ is a variant of PnP based on~\cite{Xie.etal2021}, where $\thetabm$ is estimated without any DL prior. It is worth noting that PnP-oracle$_\thetabm$ is provided as an idealized reference in our experiment. As discussed in the following subsections, we also compare BC-PnP against several widely-used baseline methods specific to CS-PMRI and blind image deblurring.

\subsection{Compressed Sensing Parallel MRI}
The measurement operator of CS-PMRI consists of complex measurement operators $\Abm(\thetabm)\in\C^{m\times n}$ that depend on unknown CSMs $\{\thetabm_i\}$ in $\C^n$.
Each sub-measurement operator can be parameterized as $\Abm_i(\thetabm_i)=\Pbm\Fbm\diag(\thetabm_i)$, where $\Fbm$ is the Fourier transform, $\Pbm\in\R^{m\times n}$ is the sampling operator, and $\diag(\thetabm_i)$ forms a matrix by placing $\thetabm_i$ on its diagonal. We used T2-weighted MR brain acquisitions of 165 subjects obtained from the validation set of the fastMRI dataset~\cite{knoll2020fastmri} as the the fully sampled measurement for simulating measurements. We obtained reference $\thetabm_i$ from the fully sampled measurements using ESPIRiT~\cite{Uecker.etal2014}. These 165 subjects were split into 145, 10, and 10 for training, validation, and testing, respectively. BC-PnP and baseline methods were tested on 10 2D slices, randomly selected from the testing subjects. 
We followed ~\cite{knoll2020fastmri} to retrospectively undersample the fully sampled data using 1D Cartesian equispaced sampling masks with 10\% auto-calibration signal (ACS)~\cite{Uecker.etal2014} lines. We conducted our experiments for acceleration factors $R = 6$ and $8$. We adopted DRUNet~\cite{Zhang.etal2022} as the architectures of $\Dsf_\sigmabm$ for training both the image and CSM denoisers. BC-PnP and its ablated variants are initialized using CSMs $\thetabm_0$ pre-estimated using ESPIRiT\cite{Uecker.etal2014} and images $\xbm_0 \leftarrow \Abm(\thetabm_0)^\Hsf\ybm$, where $\Abm^\Hsf$ denotes the Hermitian transpose of $\Abm$.

We considered several baseline methods, including ENLIVE~\cite{Holme.etal2019}, ESPIRiT-TV~\cite{Uecker.etal2014}, Unet~\cite{Ronneberger.etal2015}, and ISTANet+~\cite{zhang2018ista}. ENLIVE is an iterative algorithm that jointly estimates images and coil sensitivity profiles. ESPIRiT-TV is an iterative algorithm that applies TV reconstruction method in~\eqref{equ:optimization}. Unet is trained to map raw measurements to desired ground truth without the knowledge of measurement operator. ISTANet+ denotes a widely-used DU architecture. We tested ESPIRiT-TV and ISTANet+ using CSMs pre-estimated using ESPIRiT. 

Figure~\ref{fig:exp_pmri_convergence} illustrates the convergence behaviour of BC-PnP on the test set for the acceleration factor $R=8$. Figure~\ref{fig:exp_pmri_image} illustrates reconstruction results for the acceleration factor $R=6$.  Table~\ref{tb:exp_pmri} summarizes the quantitative evaluation of BC-PnP relative to other PnP variants and the baseline methods. These results show that joint estimation can lead to significant improvements and that BC-PnP can perform as well as the idealized PnP-oracle$_\thetabm$ that knowns the true measurement operator.

\begin{figure}[t]
    \centering
    \includegraphics[width=.975\textwidth]{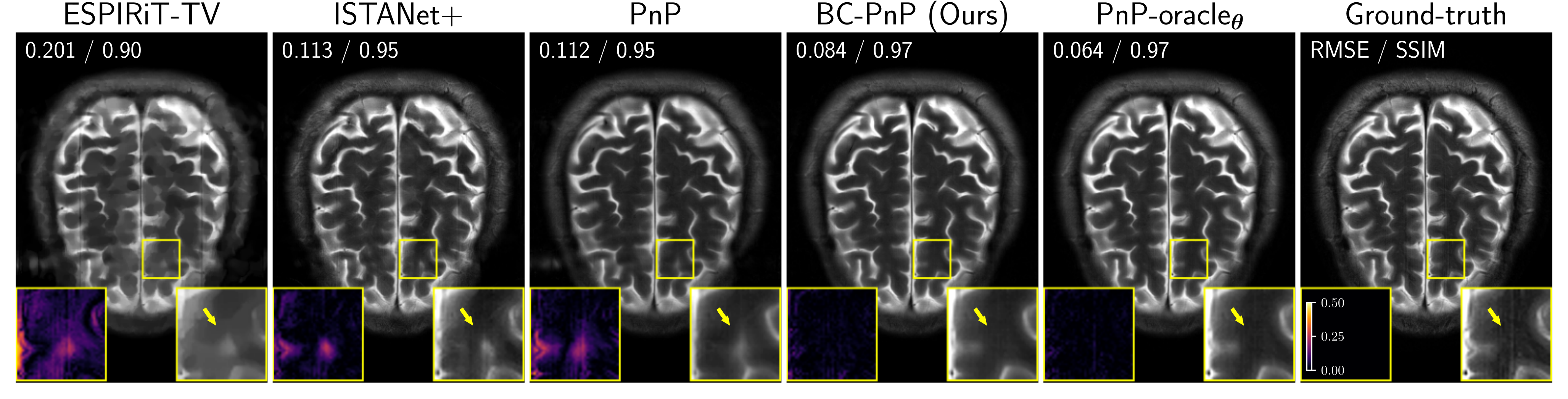}
         \caption{~\emph{Illustration of results from several well-known methods on CS-PMRI with the sampling factor $R = 6$. The quantities in the top-left corner of each image provide RMSE and SSIM values for each method. The squares at the bottom of each image visualize the error and the corresponding zoomed area in the image.  Note how BC-PnP using a deep denoiser on the unknown CSMs outperforms uncalibrated PnP and matches PnP-oracle$_\thetabm$ that knows the true CSMs.}}
    \label{fig:exp_pmri_image}
\end{figure}

\begin{table}[t]
\footnotesize
\centering
\renewcommand\arraystretch{0.75}
\setlength{\tabcolsep}{0.6pt}
\caption{RMSE and SSIM performance of several methods on CS-PMRI. The table highlights the {\color{lightred}\textbf{best}} and {\color{lightblue}\underline{second best}} results. The \textit{Calibration} column highlights methods specifically designed to solve the blind inverse problem. Note how the use of a DL prior over the measurement operator enables BC-PnP to outperform PnP and PnP-GD$_{\thetabm}$ and approach the performance of the oracle algorithm.}
\begin{threeparttable}
\begin{tabular}{lccccccccccc}
\toprule
\multirow{2}{*}{Method}               & & \multirow{2}{*}{\textit{Calibration (Y/N)}} &\ \ & \ & \multicolumn{3}{c}{$R=6$} & \ & \multicolumn{3}{c}{$R=8$} \\
\cmidrule{6-8}\cmidrule{10-12}
                     &   &   &   & \ & 
RMSE$_\xbm \downarrow$ & SSIM$_\xbm \uparrow$ & RMSE$_\thetabm \downarrow$ & \ \ & RMSE$_\xbm \downarrow$ & SSIM$_\xbm \uparrow$ & RMSE$_\thetabm \downarrow$
\\
\midrule
ENLIVE~\cite{Holme.etal2019}               &               &         \cmark      &                  &   & 0.371  & 0.763  & --- & \ & 0.419  & 0.730  & ---             \\
ESPIRiT-TV~\cite{Uecker.etal2014}           &               &     \cmark          &                  &   & 0.218  & 0.884  & 0.256 &  \ & 0.361  & 0.818  & 0.356              \\
Unet~\cite{Ronneberger.etal2015}                 &               &     \xmark          &                  &    & 0.218  & 0.904  & --- & \ & 0.195  & 0.907  & ---             \\
ISTANet+~\cite{zhang2018ista}            &               &       \xmark        &                  &   & {\color{lightblue}\underline{0.110}}  & 0.946  & --- & \ & {\color{lightblue}\underline{0.140}}  & {\color{lightblue}\underline{0.928}}  & ---              \\
\cmidrule{6-12}
PnP          &               &        \xmark       &                  &   & 0.111  & {\color{lightblue}\underline{0.950}}  & 0.256 & \ & 0.171  & 0.924  & 0.356              \\
PnP-GD$_\thetabm$~\cite{Xie.etal2021}         &               &       \cmark        &                  &   &  0.116 & {\color{lightblue}\underline{0.950}} & {\color{lightblue}\underline{0.254}}   & \ & 0.163 & 0.926 & {\color{lightblue}\underline{0.355}}             \\
\cmidrule{6-12}
BC-PnP (Ours)                 &               &        \cmark       &                  &    & {\color{lightred}\textbf{0.091}}  & {\color{lightred}\textbf{0.961}}  & {\color{lightred}\textbf{0.247}} & \ & {\color{lightred}\textbf{0.122}}  & {\color{lightred}\textbf{0.946}}  & {\color{lightred}\textbf{0.337}}             \\
\cmidrule{6-12}
PnP-oracle$_\thetabm$\tnote{$\star$}           &               &       \xmark        &                  &    & 0.069  & 0.969  & 0.000 & \ & 0.082  & 0.962  & 0.000             \\
\bottomrule
\end{tabular}
\begin{tablenotes}
\item[$\star$] \small{not available in practice for blind inverse problems.}
\end{tablenotes}
\end{threeparttable}
\label{tb:exp_pmri}
\end{table}

\begin{figure}[t]
    \centering
    \includegraphics[width=.975\textwidth]{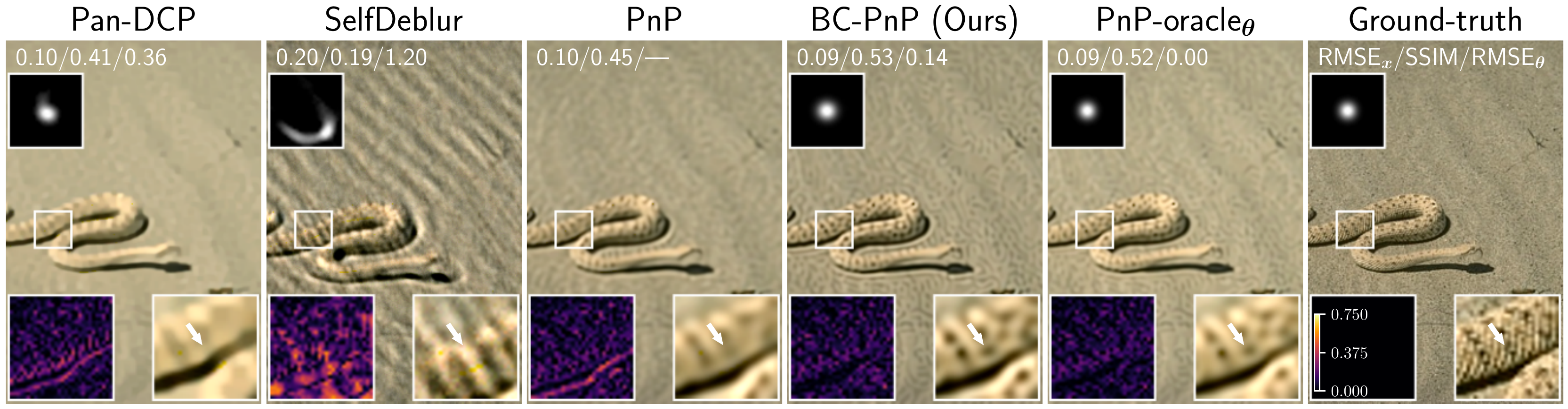}
    \caption{~\emph{Illustration of results from several well-known methods on blind image deblurring with the Gaussian kernel. The squares at the top of each image show the estimated kernels. The quantities in the top-left corner of each image provide RMSE and SSIM values for each method. The squares at the bottom of each image highlight the error and the corresponding zoomed image region. Note how the BC-PnP using a deep denoiser on the unknown kernel significantly outperforms the traditional PnP method and matches the performance of the oracle PnP method that knows the true blur kernel. Note also the effectiveness of BC-PnP for estimating the unknown blur kernel.}}
    \label{fig:exp_deconv_image}
\end{figure}

\begin{table}[t]
\footnotesize
\centering
\renewcommand\arraystretch{0.75}
\setlength{\tabcolsep}{0.6pt}
\caption{Quantitative evaluation of BC-PnP in blind image deblurring. We highlighted the {\color{lightred}\textbf{best}} and {\color{lightblue}\underline{second best}} results, respectively. The \textit{Calibration} column highlights methods specifically designed to solve the blind inverse problem. Note how the use of a prior over the measurement operator enables BC-PnP to nearly match the performance of the oracle algorithm.}

\begin{threeparttable}
\begin{tabular}{lccccccccccc}
\toprule
\multirow{2}{*}{Method}               & & \multirow{2}{*}{\textit{Calibration (Y/N)}} &\ \ & \ & \multicolumn{3}{c}{\includegraphics[width=0.035\textwidth]{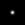}} & \ & \multicolumn{3}{c}{\includegraphics[width=0.035\textwidth]{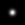}} \\
\cmidrule{6-8}\cmidrule{10-12}
                     &   &   &   & \ & 
RMSE$_\xbm \downarrow$ & SSIM$_\xbm \uparrow$ & RMSE$_\thetabm \downarrow$ & \ \ & RMSE$_\xbm \downarrow$ & SSIM$_\xbm \uparrow$ & RMSE$_\thetabm \downarrow$
\\
\midrule
Pan-DCP~\cite{pan2017deblurring}           &               &       \cmark        &                  &   &    0.087 & 0.835 & {\color{lightblue}\underline{0.283}}  &  \ & 0.114 & 0.733 & {\color{lightblue}\underline{0.246}}        \\
SelfDeblur~\cite{ren2020neural}               &               &    \cmark           &                  &   &   0.219 & 0.495 & 0.775 &  \ & 0.176 & 0.553 & 0.831       \\
DeblurGAN~\cite{kupyn2019deblurgan}                 &               &     \xmark          &                  &    &    0.090 & 0.823 & --- &  \ & 0.118 & 0.716 & ---   \\
USRNet~\cite{zhang.etal2020a}            &               &      \xmark         &                  &   &  0.106 & 0.855 & ---  &  \ & 0.114 & {\color{lightblue}\underline{0.769}} & ---         \\
\cmidrule{6-12}
PnP          &               &       \xmark        &                  &   &   {\color{lightblue}\underline{0.082}} & {\color{lightblue}\underline{0.857}} & {\color{lightblue}\underline{0.283}} &  \ & {\color{lightblue}\underline{0.106}} & 0.763 & {\color{lightblue}\underline{0.246}} \\
PnP-GD$_\thetabm$~\cite{Xie.etal2021}         &               &     \cmark          &                  &   &    {\color{lightblue}\underline{0.082}} & {\color{lightblue}\underline{0.857}} & {\color{lightblue}\underline{0.283}}  &  \ & 0.108 & 0.767 & {\color{lightblue}\underline{0.246}}    \\
\cmidrule{6-12}
BC-PnP (Ours)                 &               &       \cmark        &                  &    & {\color{lightred}\textbf{0.055}} & {\color{lightred}\textbf{0.921}} & {\color{lightred}\textbf{0.097}} &  \ & {\color{lightred}\textbf{0.098}} & {\color{lightred}\textbf{0.794}} & {\color{lightred}\textbf{0.107}} \\
\cmidrule{6-12}
PnP-oracle$_\thetabm$\tnote{$\star$}           &               &       \xmark        &                  &    &      0.051 & 0.929 & 0.000    &  \ & 0.088 & 0.817 & 0.000    \\
\bottomrule
\end{tabular}
\begin{tablenotes}
\item[$\star$] \small{not available in practice for blind inverse problems.}
\end{tablenotes}
\end{threeparttable}
\label{tb:exp_deconv}
\end{table}

\subsection{Blind Image Deblurring}

The measurement operator in blind image deblurring can be modeled as $\Abm(\thetabm)\xbm = \thetabm * \xbm$, where $\thetabm$ is the unknown blur kernel, $\xbm$ is the unknown image, and $*$ is the convolution. We randomly selected 10 testing ground truth image from CBSD68~\cite{Martin.etal2001} dataset. 
We generated $25\times 25$ Gaussian kernels with $\sigma=10$\footnote{We used \url{github.com/shangqigao/BayeSR} for generating the kernels.}. We tested the algorithms on 2 Gaussian kernels. We used a pre-trained image denoiser, as in the experimental setting of~\cite{Zhang.etal2022}. The kernel denoiser was trained on 10,000 generated kernels at several noise levels. 
We adopted DnCNN with 17 layers as the architectures of $\Dsf_\sigma$ for training kernel denoisers. BC-PnP and its ablated variants are initialized with the blur kernels $\thetabm_0$ pre-estimated using Pan-DCP~\cite{pan2017deblurring} and images $\xbm_0 \leftarrow \Abm(\thetabm_0)^\Tsf\ybm$.

We compared BC-PnP against several baseline methods, including Pan-DCP~\cite{pan2017deblurring}, SelfDeblur~\cite{ren2020neural}, DeblurGAN~\cite{kupyn2019deblurgan}, USRNet~\cite{zhang.etal2020a}. Pan-DCP is an optimization-based method that jointly estimates image and blur kernel. SelfDeblur trains two deep image priors (DIP)~\cite{Ulyanov.etal2018} to jointly estimate the blur kernel and the image. DeblurGAN is a supervised learning-based method that lacks the capability for kernel estimation, but can reconstruct images via direct inference. USRNet is a DU baseline that was tested using blur kernel estimated from~\cite{pan2017deblurring}. The results of DeblurGAN and USRNet are obtained by running the published code with the pre-trained weights.

Figure~\ref{fig:exp_deconv_image} illustrates the reconstruction results with a Gaussian kernel. Figure~\ref{fig:exp_deconv_image} demonstrates that BC-PnP can reconstruct the fine details of the snake skin, as highlighted by the white arrows, while both Pan-DCP and PnP produce smoother reconstructions.
Additionally, BC-PnP generates a more accurate blur kernel compared to the ground truth kernel, whereas Pan-DCP and SelfDeblur yield blur kernels with artifacts. Table~\ref{tb:exp_deconv} presents the quantitative evaluation of the reconstruction results using a Gaussian kernel, indicating that BC-PnP outperforms the baseline methods and nearly matches the SSIM and RMSE values of PnP-oracle$_\thetabm$ that is based on the ground truth blur kernel. 

\section{Conclusion}

The work presented in this paper proposes a new BC-PnP method for jointly estimating unknown images and unknown measurement operators in blind inverse problems, presents its theoretical analysis in terms of convergence and accuracy, and applies the method to two well-known blind inverse problems. The proposed method and its theoretical analysis extend the recent work on PnP by introducing a learned prior on the unknown measurement operator, dropping the convexity assumptions on the data-fidelity term, and nonexpansiveness assumptions on the denoiser. The numerical validation of BC-PnP shows the improvements due to the use of learned priors on the measurement operator and the ability of the method to match the performance of the oracle PnP method that knows the true measurement operator. One conclusion of this work is the potential effectiveness of PnP for solving inverse problems where the unknown quantities are not only images. 

\section*{Limitations}

The work presented in this paper comes with several limitations. The proposed BC-PnP method is based on PnP, which means that its performance is inherently limited by the use of AWGN denoisers as priors. While denoisers provide a convenient, principled, and flexible mechanism to specify priors, they are inherently self-supervised and their empirical performance can thus be suboptimal compared to priors trained in a supervised fashion for a specific inverse problem. PnP running over many iterations can also have higher computational complexity compared to some end-to-end alternatives, such as DU with a small number of steps. Our analysis is based on the assumption that the denoiser used for inference computes an approximation of the true MMSE denoiser. While this assumption is reasonable for deep denoisers trained using the MSE loss, it is not directly applicable to denoisers trained using other common loss functions, such as the $\ell_1$-norm or SSIM. Finally, as is common with most theoretical work, our analysis only holds when our assumptions are satisfied, which might limit its applicability in practice. Our future work will investigate ways to improve on the results presented here by exploring new PnP strategies for relaxing assumptions for convergence, considering end-to-end trained variants of BC-PnP based on DEQ~\cite{Gilton.etal2021, Liu.etal2022a}, and exploring BC-PnP using explicit regularizers~\cite{hurault2022proximal, Cohen.etal2021a, Hurault.etal2022}.

\section*{Broader Impact}

The expected impact of this work is in the area of imaging inverse problems with potential applications to computational imaging. There is a growing interest in computational imaging to leverage pre-trained deep models for estimating the unknown image as well as the unknown parameters of the imaging system. The ability to better address such problems can lead to new imaging tools for biomedical and scientific studies. While novel DL methods, such as the proposed BC-PnP approach, have the potential to enable new technological capabilities, they also come with a downside of being more complex and requiring higher-levels of technical sophistication. While our aim is to positively contribute to humanity, one can unfortunately envisage nonethical use of imaging technology.

\section*{Acknowledgments and Disclosure of Funding}
Research presented in this article was supported in part by the NSF CCF-2043134. This work is also supported by the NIH R01EB032713, RF1AG082030, RF1NS116565, and R21NS127425.

\newpage
\appendix

{\Large\textbf{Supplementary Material}}

\medskip

The mathematical analysis presented in this supplement builds on two distinct lines of work: (a) optimization-based characterization of the MMSE denoisers~\cite{Gribonval2011, Gribonval.Machart2013, Xu.etal2020}; (b) analysis of incremental optimization algorithms~\cite{Bertsekas2011, Bolte.etal2013, Wright2015, Mairal2015}. Our results are also related to two recent papers on PnP, namely the work on BC-RED in~\cite{Sun.etal2019b} and on PnP-ISTA in~\cite{Xu.etal2020}. Our results can be viewed as an extension of~\cite{Sun.etal2019b} to nonconvex data fidelity terms and expansive denoisers. They can also be viewed as an extension of~\cite{Xu.etal2020} to block-coordinate updates and possibly inexact MMSE denoisers.

\medskip\noindent
The structure of this supplementary document is as follows. In Section~\ref{Sup:Sec:SequentialUpdates}, we prove the convergence of BC-PnP under the deterministic sequential update rule. In Section~\ref{Sup:Sec:RandomUpdates}, we prove the convergence of BC-PnP under the random i.i.d.\ update rule. In Section~\ref{Sup:Sec:UsefulLemmas}, we provide technical lemmas useful for the proofs of the main theorems. In Section~\ref{Sup:Sec:BackgroundMaterial}, we provide background material useful for our theoretical analysis. In Section~\ref{Sup:Sec:AdditionalNumericalResults}, we provide additional simulations omitted from the main paper due to space.

\section{Proof of Theorem~\ref{Thm:SeqConv}}
\label{Sup:Sec:SequentialUpdates}

\begin{suptheorem}
Run BC-PnP under Assumptions~\ref{As:NonDegenerate}-\ref{As:LipschitzPrior} using the sequential block selection and the step $0 < \gamma < 1/\Lmax$. Then, we have
\begin{equation*}
\frac{1}{t}\sum_{i = 1}^t \|\nabla f(\xbm^{ib}) \|_2^2 \leq \frac{C_1}{t}(f(\xbm^0)-f^\ast) + C_2 \varepsilonbar_{tb}^2,
\end{equation*}
where $C_1 > 0$ and $C_2 > 0$ are iteration independent constants. If additionally the sequence of error terms $\{\varepsilon_i\}_{i \geq 1}$ is square-summable, we have that $\nabla f(\xbm^{tb}) \rightarrow \zerobm$ as $t \rightarrow 0$.
\end{suptheorem}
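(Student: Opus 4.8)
The plan is to exploit the characterization of Gribonval (see Section~\ref{Sup:Sec:MMSEBackground} and~\cite{Gribonval2011, Xu.etal2020}): under Assumption~\ref{As:NonDegenerate} the exact MMSE denoiser is the proximal operator of the smooth regularizer, $\Dsf^\ast_{\sigma_j} = \prox_{\gamma h_j}$, so that for $\xbm_j = \Dsf^\ast_{\sigma_j}(\zbm_j)$ one has $\nabla h_j(\xbm_j) = \tfrac{1}{\gamma}(\zbm_j - \xbm_j)$. With this identity the \emph{exact} BC-PnP update on block $j = i_k$ is precisely a block proximal-gradient step for $f = g + h$, and by Assumption~\ref{As:InexactDistance} the \emph{inexact} iterate differs from it by at most $\varepsilon_k$ in norm. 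The whole argument then becomes a perturbation analysis of nonconvex block proximal-gradient descent, extending~\cite{Xu.etal2020} to cyclic block updates.

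First I would establish a per-iteration sufficient-decrease inequality. Writing $\xbmtilde_j^k \defn \Dsf^\ast_{\sigma_j}(\zbm_j^k)$ for the exact denoiser output at step $k$ (with $j = i_k$) and $\xbmtilde^k$ for the corresponding iterate, the optimality of the proximal map together with the descent lemma for $g$ (block constant $L_j$, Assumption~\ref{As:LipschitzDataFit}) yields $f(\xbmtilde^k) \le f(\xbm^{k-1}) - \left(\tfrac{1}{2\gamma} - \tfrac{L_j}{2}\right)\|\xbmtilde_j^k - \xbm_j^{k-1}\|_2^2$, a genuine decrease exactly when $\gamma < 1/\Lmax$ --- this is the origin of the step-size condition. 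I would then pass from $\xbmtilde^k$ to the actual iterate $\xbm^k$, which lies in $\Im_\varepsilon(\Dsf^\ast_{\sigma_j})$ since it is within $\varepsilon_k \le \varepsilon$ of $\xbmtilde_j^k$, using smoothness of $g$ and of $h_j$ on that set (Assumption~\ref{As:LipschitzPrior}, constant $M_j$) to bound $f(\xbm^k) - f(\xbmtilde^k)$ by a first-order term of size $O(\varepsilon_k\|\xbmtilde_j^k - \xbm_j^{k-1}\|_2)$ plus $O(\varepsilon_k^2)$. Absorbing the cross term with Young's inequality gives $f(\xbm^k) \le f(\xbm^{k-1}) - c_1\|\xbmtilde_j^k - \xbm_j^{k-1}\|_2^2 + c_2\varepsilon_k^2$ with $c_1 > 0$; crucially $M_j$ enters only $c_2$, never the step-size bound.

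The delicate step is to reconnect these intermediate increments to the full gradient evaluated at the epoch boundary $\xbm^{ib}$. For each block $j$ the sequential schedule updates block $j$ exactly once per epoch, at iteration $k_j = (i-1)b + j$, and leaves it untouched afterward within the epoch, so $\nabla h_j(\xbm_j^{ib}) = \nabla h_j(\xbm_j^{k_j})$. Writing $\nabla_j f(\xbm^{ib}) = \big[\nabla_j g(\xbm^{ib}) - \nabla_j g(\xbm^{k_j-1})\big] + \big[\nabla_j g(\xbm^{k_j-1}) + \nabla h_j(\xbm_j^{k_j})\big]$, I would substitute into the second bracket the perturbed first-order relation (of norm $O(\tfrac{1}{\gamma}\|\Delta_j^{k_j}\|_2 + (\tfrac{1}{\gamma}+M_j)\varepsilon_{k_j})$, where $\Delta^k \defn \xbm^k - \xbm^{k-1}$), and bound the first bracket by $L\|\xbm^{ib} - \xbm^{k_j-1}\|_2 \le L\sum_{k=k_j}^{ib}\|\Delta^k\|_2$ via the \emph{global} Lipschitz constant $L$. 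Using $\|\Delta_j^k\|_2 \le \|\xbmtilde_j^k - \xbm_j^{k-1}\|_2 + \varepsilon_k$ and squaring, this produces $\|\nabla f(\xbm^{ib})\|_2^2 \le C_3\sum_{k=(i-1)b+1}^{ib}\left(\|\xbmtilde_j^k - \xbm_j^{k-1}\|_2^2 + \varepsilon_k^2\right)$. This ``staleness'' bookkeeping --- quantifying how much the coupling gradient $\nabla_j g$ drifts across the blocks updated later in the same epoch, and why $L$ rather than the block constants is needed --- is where the cyclic structure and the carried-through perturbation terms collide, and I expect it to be the main obstacle.

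Finally I would telescope. Summing the per-iteration decrease over $k = 1, \dots, tb$ and invoking Assumption~\ref{As:BoundedFromBelow} gives $\sum_{k=1}^{tb}\|\xbmtilde_j^k - \xbm_j^{k-1}\|_2^2 \le \tfrac{1}{c_1}(f(\xbm^0)-f^\ast) + \tfrac{c_2}{c_1}\sum_{k=1}^{tb}\varepsilon_k^2$; combining with the epoch-level gradient bound, dividing by $t$, and using $\tfrac{1}{t}\sum_{k=1}^{tb}\varepsilon_k^2 = b\,\varepsilonbar_{tb}^2$ yields the stated inequality with suitable $C_1, C_2$. For the square-summable case, $\sum_k \varepsilon_k^2 < \infty$ forces both the leading term and $\tfrac{1}{t}\sum_{k\le tb}\varepsilon_k^2$ to vanish, giving $\min_{1\le i \le t}\|\nabla f(\xbm^{ib})\|_2 \to 0$; moreover, since $\sum_k(\|\xbmtilde_j^k - \xbm_j^{k-1}\|_2^2 + \varepsilon_k^2)$ then converges, its single-epoch tail tends to $0$, which upgrades the bound to $\nabla f(\xbm^{ib}) \to \zerobm$ along the whole sequence.
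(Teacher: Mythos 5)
Your proposal is correct and follows essentially the same route as the paper's proof: the prox characterization of the exact MMSE denoiser, a per-block sufficient-decrease inequality with an $O(\varepsilon_k^2)$ penalty (the paper's Lemmas~\ref{Lem:SingleUpdateInexact} and~\ref{Lem:FullBlockUpdateIterInexact}), an epoch-level bound on $\|\nabla f(\xbm^{ib})\|_2$ obtained from the block first-order conditions plus a gradient-staleness term controlled by the global Lipschitz constant $L$, and a final telescoping step. The only differences are cosmetic bookkeeping: the paper absorbs the denoiser error through the $\lambda$-smoothness of the prox objective at its minimizer (avoiding your Young's-inequality cross term) and routes the gradient bound through the auxiliary exact-denoiser iterate $\ubm^k$ before transferring to $\xbm^{ib}$ via Lemma~\ref{Lem:LipRegInex}, whereas you work at $\xbm^{ib}$ directly.
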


\begin{proof}
The block update $i \in \{1, \cdots, b\}$ of the sequential BC-PnP using the inexact and exact denoisers can be expressed
\begin{equation}
\xbm_i^{(k-1)b+i} =
\begin{cases}
\xbm_j^{(k-1)b+i-1} & \text{if } j \neq i \\
\Dsf_{\sigma_j}\left(\xbm_j^{(k-1)b+i-1}-\gamma \nabla_j g \left (\xbm^{(k-1)b+i-1}\right)\right) & \text{if } j = i
\end{cases}.
\end{equation}
and
\begin{equation}
\zbm_i^{(k-1)b+i} =
\begin{cases}
\xbm_j^{(k-1)b+i-1} & \text{if } j \neq i \\
\Dsf^\ast_{\sigma_j}\left(\xbm_j^{(k-1)b+i-1}-\gamma \nabla_j g \left (\xbm^{(k-1)b+i-1}\right)\right) & \text{if } j = i
\end{cases}.
\end{equation}
We introduce two variables
\begin{equation*}
\vbm^k \defn \xbm^{kb} = (\xbm_1^{(k-1)b+1}, \cdots, \xbm_b^{kb}) \quad\text{and}\quad \ubm^k \defn \zbm^{kb} = (\zbm_1^{(k-1)b+1}, \cdots, \zbm_b^{kb})
\end{equation*}
Since $h_i$ is smooth for any $\zbm_i \in \Im(\Dsf^\ast_{\sigma_i})$, the optimality conditions for each denoiser imply
\begin{equation}
\label{Eq:ProxOptimalityConditionInexact}
\nabla_i g\left(\xbm^{(k-1)b+i-1}\right) + \frac{1}{\gamma} \left(\zbm_i^{(k-1)b+i}-\xbm_i^{(k-1)b+i-1}\right) + \nabla h_i\left(\zbm_i^{(k-1)b+i}\right) = \zerobm,
\end{equation}
for each $i \in \{1, \cdots, b\}$ and any $k \geq 1$. Since we have
\begin{equation*}
\vbm^{k-1} = \xbm^{(k-1)b} = \left(\xbm_1^{(k-2)b+1}, \cdots, \xbm_b^{(k-1)b}\right) = \left(\xbm_1^{(k-1)b}, \cdots, \xbm_b^{kb-1}\right).
\end{equation*}
we can re-write~\eqref{Eq:ProxOptimalityConditionInexact} as
\begin{equation*}
\frac{1}{\gamma} \left(\vbm^{k-1}-\ubm^k \right) =
\begin{bmatrix}
\nabla_1 g\left(\xbm^{(k-1)b}\right) + \nabla h_1\left(\zbm_1^{(k-1)b+1}\right) \\[0.3em]
\vdots \\[0.3em]
\nabla_b g\left(\xbm^{kb-1}\right) + \nabla h_b\left(\zbm_b^{kb}\right)
\end{bmatrix}.
\end{equation*}
\medskip\noindent
From the smoothness of $h_i$ for $\zbm_i \in \Im(\Dsf^\ast_{\sigma_i})$, we thus have that
\begin{align*}
\nabla f(\ubm^k)
=
\begin{bmatrix}
\nabla_1 g(\ubm^k) + \nabla h_1\left(\zbm_1^{(k-1)b+1}\right) \\[0.3em]
\vdots \\[0.3em]
\nabla_b g(\ubm^k) + \nabla h_b\left(\zbm_b^{kb}\right)
\end{bmatrix} 
= \frac{1}{\gamma} (\vbm^{k-1}-\ubm^k) + 
\begin{bmatrix}
\nabla_1 g(\ubm^k) -\nabla_1 g(\xbm^{(k-1)b}) \\[0.3em]
\vdots \\[0.3em]
\nabla_b g(\ubm^k) - \nabla_b g(\xbm^{kb-1}).
\end{bmatrix}
\end{align*}
From the smoothness of $g$ and the sequential nature of updates, we can obtain the following bound
\begin{align*}
&\left\|
\begin{bmatrix}
\nabla_1 g(\ubm^k) -\nabla_1 g(\xbm^{(k-1)b}) \\[0.3em]
\vdots \\[0.3em]
\nabla_b g(\ubm^k) - \nabla_b g(\xbm^{kb-1}).
\end{bmatrix}
\right\|_2 
\leq 
\sum_{i = 1}^b \|\nabla_i g(\ubm^k)-\nabla_i g(\xbm^{(k-1)b-1+i})\|_2 \\
&\leq L \sum_{i = 1}^b \|\ubm^k - \xbm^{(k-1)b-1+i}\|_2 \leq b L \left( \|\ubm^k-\vbm^k\|_2+\|\vbm^k-\vbm^{k-1}\|_2 \right),
\end{align*}
where for the last inequality we used the triangualar inequality. By combining the last two equations and using the step-size $\gamma = 1/(\alpha \Lmax)$, we get
\begin{align*}
\|\nabla f(\ubm^k)\|_2 
&\leq \alpha \Lmax \|\ubm^k-\vbm^{k-1}\|_2 + bL \|\ubm^k-\vbm^k\|_2 + bL \|\vbm^k-\vbm^{k-1}\|_2 \\
&\leq (\alpha \Lmax+bL) \|\ubm^k-\vbm^k\|_2 + (\alpha\Lmax + bL) \|\vbm^k-\vbm^{k-1}\|_2 \\
&\leq (\alpha\Lmax + bL) \|\vbm^k-\vbm^{k-1}\|_2 + (\alpha\Lmax + bL) \sum_{i = 1}^b \varepsilon_{(k-1)b+i}.
\end{align*}
By using this bound, we can get the following bound for the iterate of BC-PnP
\begin{align*}
\|\nabla f(\vbm^k)\|_2 
&\leq \|\nabla f(\ubm^k)\|_2 + \|\nabla f(\vbm^k) - \nabla f(\ubm^k)\|_2 \\
&\leq \|\nabla f(\ubm^k)\|_2 + (L+\Mmax)\|\vbm^k-\ubm^k\|_2 \\
&\leq \|\nabla f(\ubm^k)\|_2 + (L+\Mmax) \sum_{i = 1}^b \varepsilon_{(k-1)b+i} \\
&\leq A_1 \|\vbm^k-\vbm^{k-1}\|_2 + A_2 \sum_{i = 1}^b \varepsilon_{(k-1)b+i}.
\end{align*}
with $A_1 \defn (\alpha\Lmax + bL)$ and $A_2 \defn (\alpha\Lmax + bL + L + \Mmax)$, where we first used the triangular inequality and then Lemma~\ref{Lem:LipRegInex}. By squaring both sides and using $(a+b)^2 \leq 2a^2 + 2b^2$
\begin{align*}
\|\nabla f(\vbm^k)\|_2^2 
&\leq 2A_1^2 \|\vbm^k-\vbm^{k-1}\|_2^2 + 2A_2^2 \left[\sum_{i = 1}^b \varepsilon_{(k-1)b+i}\right]^2 \\
&\leq 2A_1^2 \|\vbm^k-\vbm^{k-1}\|_2^2 + 2bA_2^2  \sum_{i = 1}^b \varepsilon_{(k-1)b+i}^2.
\end{align*}
By combining this inequality with Lemma~\ref{Lem:FullBlockUpdateIterInexact}, we get
\begin{equation*}
\|\nabla f(\vbm^k)\|_2^2 \leq B_1 (f(\vbm^{k-1})-f(\vbm^k)) + B_2 \sum_{i = 1}^b \varepsilon_{(k-1)b+i}^2
\end{equation*}
with $B_1 \defn 4 A_1^2/((\alpha-1)\Lmax)$ and $B_2 \defn 2bA_2^2 + \lambda A_1^2$, where $\lambda$ is given in Lemma~\ref{Lem:FullBlockUpdateIterInexact}. By averaging both sides of the bound over $t \geq 1$, we get the desired result
\begin{equation*}
\min_{1 \leq k \leq t} \|\nabla f(\vbm^k)\|_2^2 \leq  \frac{1}{t}\sum_{k = 1}^t \|\nabla f(\vbm^k)\|_2^2 \leq \frac{C_1}{t}(f(\xbm^0)-f^\ast) + C_2 \left[\frac{1}{tb}\sum_{k = 1}^{tb} \varepsilon_k^2\right]
\end{equation*}
where $C_1 \defn B_1$ and $C_2 \defn bB_2$.
\end{proof}

\section{Proof of Theorem~\ref{Thm:RandomConv}}
\label{Sup:Sec:RandomUpdates}

\begin{suptheorem}
Run BC-PnP under Assumptions~\ref{As:NonDegenerate}-\ref{As:LipschitzPrior} using the random i.i.d.\ block selection and the step $0 < \gamma < 1/\Lmax$. Then, we have
\begin{equation*}
\E \left[\frac{1}{t}\sum_{k = 1}^t \|\Gsf(\xbm^{k-1}) \|_2^2 \right] \leq \frac{D_1}{t}(f(\xbm^0)-f^\ast) + D_2 \varepsilonbar_t^2,
\end{equation*}
where $D_1 > 0$ and $D_2 > 0$ are iteration independent constants. If additionally the sequence of error terms $\{\varepsilon_i\}_{i \geq 1}$ is square-summable, we have that $\Gsf(\xbm^t) \asarrow \zerobm$ as $t \rightarrow \infty$.
\end{suptheorem}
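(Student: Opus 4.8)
The plan is to mirror the structure of the sequential proof of Theorem~\ref{Thm:SeqConv}, but to replace the deterministic per-epoch bookkeeping by a per-step analysis carried out in conditional expectation over the random index $i_k$. Write $\Fcal_{k-1}$ for the $\sigma$-algebra generated by $\xbm^0, i_1, \dots, i_{k-1}$, so that $\xbm^{k-1}$ is $\Fcal_{k-1}$-measurable and, conditionally on $\Fcal_{k-1}$, the index $i_k$ is uniform on $\{1,\dots,b\}$. The target of the first part is a single-step inequality of the form $\E[f(\xbm^k)\mid\Fcal_{k-1}] \le f(\xbm^{k-1}) - (c/b)\,\|\Gsf(\xbm^{k-1})\|_2^2 + D\varepsilon_k^2$ with iteration-independent constants $c,D>0$; the averaged bound and the almost-sure statement then follow by telescoping and a supermartingale argument.

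First I would establish a single-block sufficient-decrease inequality. Fixing the realized index $i=i_k$, let $\xbmhat_i^k \defn \Dsf^\ast_{\sigma_i}(\zbm_i^k)$ be the \emph{exact}-denoiser update and let $\xbmhat^k$ agree with $\xbm^{k-1}$ on every block except $i$. The optimality condition~\eqref{Eq:ProxOptimalityConditionInexact}, specialized to the single updated block at the base point $\xbm^{k-1}$, identifies $\xbmhat_i^k$ as a proximal-gradient step; combining the block descent lemma for $g$ (constant $L_i \le \Lmax$) with this proximal property, exactly as in the same smoothness-and-proximal argument that yields Lemma~\ref{Lem:FullBlockUpdateIterInexact} and the scalar analysis of~\cite{Xu.etal2020}, gives $f(\xbmhat^k) \le f(\xbm^{k-1}) - \tfrac12(\tfrac1\gamma - \Lmax)\|\xbmhat_i^k - \xbm_i^{k-1}\|_2^2$. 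This coefficient is positive precisely because $\gamma < 1/\Lmax$ and, crucially, does not involve $\Mmax$. I would then transfer the bound from the exact iterate $\xbmhat^k$ to the actual inexact iterate $\xbm^k$: since both differ from $\xbm^{k-1}$ only in block $i$ and lie in $\Im_\varepsilon(\Dsf^\ast_\sigmabm)$, where $f$ has an $(L+\Mmax)$-Lipschitz gradient (Assumptions~\ref{As:LipschitzDataFit} and~\ref{As:LipschitzPrior}), a first-order expansion together with the inexactness bound $\|\xbm_i^k - \xbmhat_i^k\|_2 \le \varepsilon_k$ (Assumption~\ref{As:InexactDistance}) and Young's inequality converts this into $f(\xbm^k) \le f(\xbm^{k-1}) - c_0\|\xbm_i^k - \xbm_i^{k-1}\|_2^2 + C_0\varepsilon_k^2$. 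Because $\xbm_i^k - \xbm_i^{k-1} = -\gamma\Gsf_i(\xbm^{k-1})$, the decrement equals $c_0\gamma^2\|\Gsf_i(\xbm^{k-1})\|_2^2$, now expressed through the inexact operator $\Gsf$ that appears in the statement.

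The genuinely random step is to take the conditional expectation over $i_k$. Using the uniform law and the block norm-preservation identity in~\eqref{Eq:BlockExtraction}, $\E[\|\Gsf_{i_k}(\xbm^{k-1})\|_2^2 \mid \Fcal_{k-1}] = \frac1b\sum_{i=1}^b\|\Gsf_i(\xbm^{k-1})\|_2^2 = \frac1b\|\Gsf(\xbm^{k-1})\|_2^2$, which produces the factor $1/b$ and yields $\E[f(\xbm^k)\mid\Fcal_{k-1}] \le f(\xbm^{k-1}) - (c/b)\|\Gsf(\xbm^{k-1})\|_2^2 + D\varepsilon_k^2$ with $c=c_0\gamma^2$ and $D=C_0$. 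Taking total expectations, summing over $k=1,\dots,t$, telescoping, and using $f \ge f^\ast$ (Assumption~\ref{As:BoundedFromBelow}) gives $\frac{c}{b}\sum_{k=1}^t\E[\|\Gsf(\xbm^{k-1})\|_2^2] \le f(\xbm^0)-f^\ast + D\sum_{k=1}^t\varepsilon_k^2$; dividing by $t$ and recalling $\varepsilonbar_t^2 = \frac1t\sum_{k=1}^t\varepsilon_k^2$ produces the claimed bound with $D_1 = b/c$ and $D_2 = bD/c$, while the leftmost minimum is immediate since a minimum is at most an average.

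For the almost-sure claim I would apply the Robbins--Siegmund supermartingale convergence theorem to the nonnegative sequence $f(\xbm^k) - f^\ast$: the single-step inequality exhibits it as a nonnegative almost-supermartingale with summable additive perturbation $D\varepsilon_k^2$ (this is exactly where square-summability of $\{\varepsilon_k\}$ is used) and nonnegative decrement $(c/b)\|\Gsf(\xbm^{k-1})\|_2^2$. The theorem then guarantees both that $f(\xbm^k)$ converges a.s.\ and that $\sum_{k\ge1}\|\Gsf(\xbm^{k-1})\|_2^2 < \infty$ a.s.; since a summable nonnegative series has vanishing terms, $\Gsf(\xbm^{k-1}) \asarrow \zerobm$. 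I expect the main obstacle to be the careful transfer from the exact proximal update $\xbmhat^k$ to the inexact iterate $\xbm^k$ inside the conditional expectation: one must collect every cross term into a single $\varepsilon_k^2$ with \emph{block-independent} constants (using $\Lmax$ and $\Mmax$ to uniformize over the random choice of block), and must ensure the descent is ultimately phrased in terms of the inexact operator $\Gsf$ rather than its exact counterpart $\Gsfhat$, since it is $\Gsf$ whose zeros coincide with those of $\nabla f$.
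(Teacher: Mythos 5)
Your probabilistic skeleton is exactly the paper's: condition on $\Fcal_{k-1}$, use the uniform law and block norm preservation to get $\E[\|\Gsf_{i_k}(\xbm^{k-1})\|_2^2 \mid \Fcal_{k-1}] = \tfrac1b\|\Gsf(\xbm^{k-1})\|_2^2$, telescope with $f \geq f^\ast$, and invoke the supermartingale convergence theorem for the almost-sure claim; the constants $D_1, D_2$ come out the same way. Where you diverge is in the derivation of the single-step sufficient-decrease inequality. The paper's Lemma~\ref{Lem:SingleUpdateInexact} never passes through the exact iterate at the level of $f$: it works with the surrogate $\varphi_{i_k}(\ubm) = \tfrac{1}{2\gamma}\|\ubm-(\xbm_{i_k}^{k-1}-\gamma\nabla_{i_k}g(\xbm^{k-1}))\|_2^2 + h_{i_k}(\ubm)$, whose gradient is $\lambda$-Lipschitz with $\lambda = \alpha\Lmax+\Mmax$ over $\Im_\varepsilon(\Dsf^\ast_{\sigma_{i_k}})$ and which is minimized by the exact-denoiser output; expanding $\varphi_{i_k}$ at its minimizer kills the first-order term, so the inexactness enters cleanly as $\varphi_{i_k}(\xbm_{i_k}^k) \leq \varphi_{i_k}(\xbm_{i_k}^\ast) + \tfrac{\lambda\varepsilon_k^2}{2}$ with no cross term, and the decrement is already phrased in $\|\xbm_{i_k}^k-\xbm_{i_k}^{k-1}\|_2^2 = \gamma^2\|\Gsf_{i_k}(\xbm^{k-1})\|_2^2$. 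Your route---exact-step descent with coefficient $\tfrac12(\tfrac1\gamma-\Lmax)$ followed by a transfer to the inexact iterate via the smoothness of $f$ on $\Im_\varepsilon(\Dsf^\ast_\sigmabm)$---is workable and buys a cleaner exact-step constant (no $\Mmax$ there), but it pushes all the difficulty into the transfer, which is the one place your sketch is incomplete as written: a first-order expansion of $f$ at $\xbmhat^k$ produces the cross term $\nabla_{i_k} f(\xbmhat^k)^\Tsf(\xbm_{i_k}^k-\xbmhat_{i_k}^k)$, and Young's inequality alone leaves an uncontrolled $\|\nabla_{i_k} f(\xbmhat^k)\|_2^2$. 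To close this you must use the prox optimality condition $\nabla_{i_k} f(\xbmhat^k) = (\nabla_{i_k} g(\xbmhat^k)-\nabla_{i_k} g(\xbm^{k-1})) - \tfrac1\gamma(\xbmhat_{i_k}^k-\xbm_{i_k}^{k-1})$ to bound this gradient by $(L+\tfrac1\gamma)\|\xbmhat_{i_k}^k-\xbm_{i_k}^{k-1}\|_2$, choose the Young parameter small enough that the resulting quadratic is absorbed by the exact-step decrement, and then use $\|\xbmhat_{i_k}^k-\xbm_{i_k}^{k-1}\|_2^2 \geq \tfrac12\|\xbm_{i_k}^k-\xbm_{i_k}^{k-1}\|_2^2 - \varepsilon_k^2$ to restate the decrement through the inexact operator. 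With those ingredients your single-step inequality holds with block-independent constants and the rest of your argument goes through verbatim; the paper's surrogate-based lemma simply sidesteps this bookkeeping.
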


\begin{proof}
To simplify our notations and analysis we will use $\gamma = 1/(\alpha \Lmax)$ with $\alpha > 1$. Note that Assumption~\ref{As:BoundedFromBelow} implies that there exists $f^\ast > -\infty$ such that we have almost surely $f^\ast \leq f(\xbm^k)$, $k \geq 1$.
Consider the iteration $k$ of BC-PnP in~\eqref{Eq:BCPnPUpdate}, where the random variables $i_k$ are selected uniformly at random from $\{1, \cdots, b\}$. 
This implies that
\begin{align}
\label{Eq:ExpBoundInex}
\E\left[\|\xbm^k-\xbm^{k-1}\|_2^2 \,|\, \xbm^{k-1}\right] 
&= \frac{1}{b}\sum_{j = 1}^b \|\xbm_j^{k-1}-\Dsf_{\sigma_j}(\xbm_j^{k-1}-\gamma \nabla_j g(\xbm^{k-1}))\|_2^2\\
\nonumber&= \frac{\gamma^2}{b}\sum_{j = 1}^b \|\Gsf_j(\xbm^{k-1})\|_2^2 = \frac{\gamma^2}{b}\|\Gsf(\xbm^{k-1})\|_2^2 = \frac{1}{b(\alpha\Lmax)^2} \|\Gsf(\xbm^{k-1})\|_2^2.
\end{align}
On the other hand, from Lemma~\ref{Lem:SingleUpdateInexact}, we have almost surely that
\begin{equation*}
f(\xbm^k) \leq f(\xbm^{k-1}) - (\alpha - 1)\frac{\Lmax}{2}\|\xbm_{i_k}^k-\xbm_{i_k}^{k-1}\|_2^2 + \frac{\lambda \varepsilon_k^2}{2}
\end{equation*}
By taking conditional expectation of this bound, subtracting $f^\ast$ from both sides, and using the equality~\eqref{Eq:ExpBoundInex}, we get
\begin{equation}
\label{Eq:PreSupermartInex}
\E\left[\left(f(\xbm^k)-f^\ast \right) \,|\, \xbm^{k-1} \right] \leq \left(f(\xbm^{k-1})-f^\ast\right) - \theta \|\Gsf(\xbm^{k-1})\|_2^2 + \frac{\lambda \varepsilon_k^2}{2},
\end{equation}
where $\theta \defn (\alpha-1)/(2\alpha^2 b \Lmax)$. Hence, by averaging over $t \geq 1$ iterations and taking the total expectation, we obtain
\begin{equation*}
\E\left[\frac{1}{t}\sum_{k = 1}^t \|\Gsf(\xbm^{k-1})\|_2^2\right] \leq \frac{D_1}{t} (f(\xbm^0)-f^\ast) + D_2 \left[\frac{1}{t}\sum_{k = 1}^t \varepsilon_k^2\right],
\end{equation*}
where $D_1 \defn 1/\theta$ and $D_2 \defn \lambda/(2\theta)$.
If $\{\varepsilon_k\}_{k \geq 1}$ in~\eqref{Eq:PreSupermartInex} is square summable, we can apply the supermartingale convergence theorem (see Section~\ref{Sup:Sec:BackgroundMaterial}), to get almost surely
\begin{equation*}
\sum_{k = 1}^\infty \|\Gsf(\xbm^{k-1})\|_2^2 < \infty,
\end{equation*}
which implies that $\|\Gsf(\xbm^k)\|_2 \asarrow 0$ as $k \rightarrow \infty$.

\end{proof}

\section{Useful technical lemmas}
\label{Sup:Sec:UsefulLemmas}

\begin{lemma}
\label{Lem:FullBlockUpdateIterInexact}
Consider the iteration $k \geq 1$ of BC-PnP under Assumptions~\ref{As:NonDegenerate}-\ref{As:LipschitzPrior} using the sequential block selection and the step-size $\gamma = 1/(\alpha\Lmax)$ with $\alpha > 1$. Then, we have that
\begin{equation*}
\|\vbm^k-\vbm^{k-1}\|_2^2 \leq \frac{2}{(\alpha-1)\Lmax}\left(f(\vbm^{k-1})-f(\vbm^k)\right) + \frac{\lambda}{2}\sum_{i = 1}^b \varepsilon_{(k-1)b+i}^2, \quad k \geq 1,
\end{equation*}
where $\vbm^k \defn (\xbm_1^{(k-1)b+1}, \cdots, \xbm_b^{kb})$, $f = g + h$ with $h$ defined in~\eqref{Eq:FullReg}, and $\lambda \defn \left(\alpha \Lmax + \Mmax\right)$.
\end{lemma}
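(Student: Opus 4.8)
The plan is to prove the epoch bound by first establishing a single-block sufficient-decrease inequality and then telescoping it across the $b$ updates that make up epoch $k$. Fix a step $m = (k-1)b+i$ of the sequential scheme, at which only block $i$ is refreshed: $\xbm_i^m = \Dsf_{\sigma_i}(\xbm_i^{m-1}-\gamma\nabla_i g(\xbm^{m-1}))$, while $\xbm_j^m = \xbm_j^{m-1}$ for $j \neq i$. Writing $\Delta_i^m \defn \xbm_i^m - \xbm_i^{m-1}$, I would bound $g(\xbm^m)$ by the descent lemma using the block-Lipschitz smoothness of $\nabla_i g$ from Assumption~\ref{As:LipschitzDataFit} (constant $L_i \le \Lmax$), and bound $h(\xbm^m)-h(\xbm^{m-1}) = h_i(\xbm_i^m)-h_i(\xbm_i^{m-1})$, using the separability in~\eqref{Eq:FullReg} together with the $M_i$-Lipschitz smoothness of $\nabla h_i$ on $\Im_\varepsilon(\Dsf^\ast_{\sigma_i})$ from Assumption~\ref{As:LipschitzPrior} ($M_i \le \Mmax$). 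All the relevant block iterates lie in this set, since every block value is a denoiser output, so both smoothness bounds apply.

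The crux is to convert the MMSE-denoiser optimality condition into a descent guarantee. The exact output $\zbm_i^m = \Dsf^\ast_{\sigma_i}(\xbm_i^{m-1}-\gamma\nabla_i g(\xbm^{m-1}))$ obeys the proximal stationarity relation $\nabla_i g(\xbm^{m-1}) + \tfrac{1}{\gamma}(\zbm_i^m-\xbm_i^{m-1}) + \nabla h_i(\zbm_i^m) = \zerobm$, which follows from the smoothness of $h_i$ and the definition of the MMSE denoiser exactly as in~\eqref{Eq:ProxOptimalityConditionInexact}. Substituting this identity into the two smoothness bounds cancels the first-order terms against the $\tfrac{1}{\gamma}$-weighted inner products, leaving a dominant quadratic term $-(\tfrac{1}{\gamma}-\tfrac{L_i+M_i}{2})\|\Delta_i^m\|_2^2$ up to remainders. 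Choosing $\gamma = 1/(\alpha\Lmax)$ with $\alpha > 1$ makes this coefficient equal to $-\tfrac{(\alpha-1)\Lmax}{2} < 0$. The inexactness, namely that the condition holds for $\zbm_i^m$ rather than the realized iterate $\xbm_i^m$, is absorbed via Assumption~\ref{As:InexactDistance} ($\|\xbm_i^m-\zbm_i^m\|_2 \le \varepsilon_m$) and the $M_i$-Lipschitz bound, producing an $\varepsilon_m^2$ remainder with a constant $c_\varepsilon$ built from $\alpha\Lmax$ and $\Mmax$. This yields the per-block inequality $f(\xbm^m) \le f(\xbm^{m-1}) - \tfrac{(\alpha-1)\Lmax}{2}\|\Delta_i^m\|_2^2 + c_\varepsilon\,\varepsilon_m^2$.

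Summing this inequality over $i = 1, \dots, b$ within epoch $k$, the objective values telescope to give $f(\vbm^k) \le f(\vbm^{k-1}) - \tfrac{(\alpha-1)\Lmax}{2}\sum_{i=1}^b\|\Delta_i^{(k-1)b+i}\|_2^2 + c_\varepsilon\sum_{i=1}^b\varepsilon_{(k-1)b+i}^2$. It then remains to identify the sum of squared block increments with $\|\vbm^k-\vbm^{k-1}\|_2^2$. Because block $i$ is refreshed exactly once in epoch $k$, at step $(k-1)b+i$, and is untouched afterwards, the $i$th block of $\vbm^{k-1}=\xbm^{(k-1)b}$ equals $\xbm_i^{(k-1)b+i-1}$ and the $i$th block of $\vbm^k=\xbm^{kb}$ equals $\xbm_i^{(k-1)b+i}$; the norm-preservation identity implied by~\eqref{Eq:BlockExtraction} then gives $\|\vbm^k-\vbm^{k-1}\|_2^2 = \sum_{i=1}^b\|\Delta_i^{(k-1)b+i}\|_2^2$. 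Dividing through by $\tfrac{(\alpha-1)\Lmax}{2}$ and collecting the error constant into $\lambda = \alpha\Lmax+\Mmax$ delivers the stated bound.

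The hard part will be the single-block inexact descent of the second step: the optimality condition pins down the exact proximal point $\zbm_i^m$ rather than the denoiser output $\xbm_i^m$ that the algorithm actually produces, so the $\varepsilon_m$ error must be propagated simultaneously through the $g$-descent and the $h_i$-descent while keeping the dominant quadratic coefficient strictly negative. This is where the block-Lipschitz constants of Assumptions~\ref{As:LipschitzDataFit} and~\ref{As:LipschitzPrior}, the membership of both $\xbm_i^m$ and $\zbm_i^m$ in $\Im_\varepsilon(\Dsf^\ast_{\sigma_i})$, and the restriction $\alpha > 1$ must all combine, and it is the step most sensitive to constant bookkeeping.
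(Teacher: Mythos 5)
Your proposal is correct and follows essentially the same route as the paper: the per-block sufficient-decrease inequality you sketch is exactly the paper's Lemma~\ref{Lem:SingleUpdateInexact}, and the epoch bound then follows by telescoping over the $b$ sequential updates and identifying $\sum_{i=1}^b\|\xbm_i^{(k-1)b+i}-\xbm_i^{(k-1)b+i-1}\|_2^2$ with $\|\vbm^k-\vbm^{k-1}\|_2^2$ via the once-per-epoch update structure, just as in the paper. The only cosmetic difference is that you derive the single-block step by cancelling first-order terms against the stationarity condition, whereas the paper compares values of the surrogate $\varphi_{i_k}$ at its exact minimizer; both hinge on the same ingredients and yield the same quadratic coefficient $-(\alpha-1)\Lmax/2$.
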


\begin{proof}
First note that due to the sequential nature of block updates, we have that
\begin{align*}
&\vbm^k = \xbm^{kb} =  (\xbm_1^{(k-1)b+1}, \cdots, \xbm_b^{kb}),\\
&\vbm^{k-1} = \xbm^{(k-1)b} = \left(\xbm_1^{(k-2)b+1}, \cdots, \xbm_b^{(k-1)b}\right) = \left(\xbm_1^{(k-1)b}, \cdots, \xbm_b^{kb-1}\right).
\end{align*}
By combining the observation above with Lemma~\ref{Lem:SingleUpdateInexact}, we get the following bound
\begin{align*}
&(\alpha-1)\frac{\Lmax}{2}\sum_{i = 1}^b \|\xbm_i^{(k-1)b+i}-\xbm_i^{(k-1)b+i-1}\|_2^2 
= (\alpha-1)\frac{\Lmax}{2} \|\vbm^k-\vbm^{k-1}\|_2^2 \\
&\leq f(\xbm^{(k-1)b}) - f(\xbm^{kb})  + \frac{\lambda}{2}\sum_{i = 1}^b \varepsilon_{(k-1)b+i}^2= f(\vbm^{k-1}) - f(\vbm^k) + \frac{\lambda}{2}\sum_{i = 1}^b \varepsilon_{(k-1)b+i}^2,
\end{align*}
which directly leads to the desired result.
\end{proof}

\begin{lemma}
\label{Lem:SingleUpdateInexact}
Consider the iteration $k \geq 1$ of BC-PnP in~\eqref{Eq:BCPnPUpdate} with the step-size $\gamma = 1/(\alpha \Lmax)$ with $\alpha > 1$. If Assumptions~\ref{As:NonDegenerate}-\ref{As:LipschitzPrior} are true, we have that
\begin{equation}
f(\xbm^k) \leq f(\xbm^{k-1}) - (\alpha - 1)\frac{\Lmax}{2}\|\xbm_{i_k}^k-\xbm_{i_k}^{k-1}\|_2^2 + \frac{\lambda \varepsilon_k^2}{2},
\end{equation}
where $f = g + h$, with $h$ defined in~\eqref{Eq:FullReg}, and $\lambda \defn \left(\alpha \Lmax + \Mmax\right)$.
\end{lemma}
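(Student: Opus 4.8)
The plan is to reduce everything to the single active block $i \defn i_k$, since the update~\eqref{Eq:BCPnPUpdate} leaves all other blocks fixed, and then to run the standard proximal-gradient descent argument ``block-wise,'' absorbing the denoiser inexactness into the $\varepsilon_k^2$ term. Write $\xbm \defn \xbm^{k-1}$, let $w \defn \xbm_i - \gamma \nabla_i g(\xbm)$ be the point fed to the denoiser, and set $\hat\xbm_i \defn \Dsf^\ast_{\sigma_i}(w)$ (the exact MMSE output) and $\xbm_i^k = \Dsf_{\sigma_i}(w)$ (the inexact output actually used), so that $\|\xbm_i^k - \hat\xbm_i\|_2 \le \varepsilon_k$ by Assumption~\ref{As:InexactDistance}. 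Because $f = g+h$ with $h$ separable and only block $i$ moving, it suffices to control $g(\xbm^k)-g(\xbm)$ through a block descent lemma and $h_i(\xbm_i^k)-h_i(\xbm_i)$ through the MMSE--prox characterization of $h_i$ in~\eqref{Eq:ExpReg}.

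First I would record the two facts coming from the exact denoiser. By the MMSE--regularizer link established in Section~\ref{Sup:Sec:MMSEBackground} (following Gribonval and~\cite{Xu.etal2020}), the point $\hat\xbm_i$ minimizes the proximal objective $\Psi(\ybm) \defn \frac{1}{2\gamma}\|\ybm - w\|_2^2 + h_i(\ybm)$, i.e. $\hat\xbm_i = \prox_{\gamma h_i}(w)$ with $\nabla\Psi(\hat\xbm_i) = \zerobm$ (this is exactly the optimality condition~\eqref{Eq:ProxOptimalityConditionInexact}). I would then transfer this from $\hat\xbm_i$ to the inexact iterate $\xbm_i^k$: the quadratic term contributes Hessian $\tfrac1\gamma\Ibf$ and $h_i$ has an $M_i$-Lipschitz gradient on $\Im_\varepsilon(\Dsf^\ast_{\sigma_i})$ by Assumption~\ref{As:LipschitzPrior}, so $\nabla\Psi$ is $(\tfrac1\gamma + M_i)$-Lipschitz there, with $\tfrac1\gamma + M_i = \alpha\Lmax + M_i \le \lambda$. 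Applying the descent lemma to $\Psi$ at $\hat\xbm_i$, using $\nabla\Psi(\hat\xbm_i)=\zerobm$ and $\|\xbm_i^k-\hat\xbm_i\|_2 \le \varepsilon_k$, yields $\Psi(\xbm_i^k) \le \Psi(\hat\xbm_i) + \tfrac{\lambda}{2}\varepsilon_k^2 \le \Psi(\xbm_i) + \tfrac{\lambda}{2}\varepsilon_k^2$, where the last inequality is the minimizer property $\Psi(\hat\xbm_i)\le\Psi(\xbm_i)$.

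Next I would unpack $\Psi(\xbm_i^k)\le\Psi(\xbm_i)+\tfrac{\lambda}{2}\varepsilon_k^2$. Expanding the squared norms with $w = \xbm_i - \gamma\nabla_i g(\xbm)$ and completing the square cancels the $\|\nabla_i g\|_2^2$ terms and leaves $h_i(\xbm_i^k) - h_i(\xbm_i) \le -\tfrac{1}{2\gamma}\|\xbm_i^k-\xbm_i\|_2^2 - \langle \nabla_i g(\xbm),\, \xbm_i^k-\xbm_i\rangle + \tfrac{\lambda}{2}\varepsilon_k^2$. In parallel, the block Lipschitz continuity of $\nabla_i g$ (Assumption~\ref{As:LipschitzDataFit}, with $L_i \le \Lmax$) gives the block descent lemma $g(\xbm^k) \le g(\xbm) + \langle\nabla_i g(\xbm),\,\xbm_i^k-\xbm_i\rangle + \tfrac{\Lmax}{2}\|\xbm_i^k-\xbm_i\|_2^2$. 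Adding the two bounds cancels the inner-product terms, and with $\tfrac1\gamma = \alpha\Lmax$ the coefficient on $\|\xbm_i^k-\xbm_i\|_2^2$ becomes $\tfrac{\Lmax}{2} - \tfrac{1}{2\gamma} = -(\alpha-1)\tfrac{\Lmax}{2}$. Since $\xbm_i^k = \xbm_{i_k}^k$ and $\xbm_i = \xbm_{i_k}^{k-1}$, this is precisely the claimed inequality with $\lambda = \alpha\Lmax + \Mmax$.

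I expect the main obstacle to be the step $\Psi(\hat\xbm_i) \le \Psi(\xbm_i)$, namely that the exact MMSE output is a genuine \emph{global} minimizer of the (generally nonconvex) proximal objective $\Psi$ and not merely the stationary point certified by~\eqref{Eq:ProxOptimalityConditionInexact}. This is where I would lean on the Gribonval-type characterization of $h_i$ in~\eqref{Eq:ExpReg} from Section~\ref{Sup:Sec:MMSEBackground}. A secondary subtlety is checking that the segment joining $\hat\xbm_i$, $\xbm_i^k$, and $\xbm_i$ stays inside the region $\Im_\varepsilon(\Dsf^\ast_{\sigma_i})$ on which $h_i$ is smooth, so that the descent lemma for $\Psi$ and the Lipschitz-gradient bounds are legitimately applicable; everything else reduces to the two descent-lemma applications and the completion of the square.
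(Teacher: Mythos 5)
Your proposal is correct and follows essentially the same route as the paper's proof: introduce the exact-MMSE update as an auxiliary point, use the prox characterization of $\Dsf^\ast_{\sigma_i}$ together with the $\left(\alpha\Lmax+\Mmax\right)$-smoothness of the proximal objective to absorb the denoiser mismatch into the $\tfrac{\lambda}{2}\varepsilon_k^2$ term, then expand the square and add the block descent lemma for $g$. The two subtleties you flag (global minimality of the exact MMSE output and smoothness of $h_i$ on $\Im_\varepsilon(\Dsf^\ast_{\sigma_i})$) are exactly the points the paper handles via Section~\ref{Sup:Sec:MMSEBackground} and Assumption~\ref{As:LipschitzPrior}.
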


\begin{proof}
Consider the update that would be obtained by using the exact MMSE denoiser $\Dsf^\ast_{\sigma_i}$
\begin{equation*}
\xbm_j^\ast =
\begin{cases}
\xbm_j^{k-1} & \quad \text{when } j \neq i_k \\
\Dsf^\ast_{\sigma_j}(\xbm_j^{k-1}-\gamma \nabla_j g(\xbm^{k-1})) & \quad \text{when } j = i_k
\end{cases},
\quad j \in \{1, \cdots, b\},
\end{equation*}
From the fact that $\Dsf_{\sigma_i}^\ast$ is the MMSE denoiser, we know that $\xbm_{i_k}^\ast \in \Im_\varepsilon(\Dsf_{\sigma_{i_k}}^\ast)$ minimizes
\begin{equation}
\varphi_{i_k}(\ubm) \defn \frac{1}{2\gamma}\|\ubm-(\xbm_{i_k}^{k-1}-\gamma \nabla_{i_k} g(\xbm^{k-1})\|_2^2 + h_{i_k}(\ubm), \quad \ubm \in \R^{n_i}.
\end{equation}
From Assumption~\ref{As:LipschitzPrior}, we know that $\nabla h_{i_k}$ is $M_{i_k}$-Lipscthitz continuous over $\Im_\varepsilon(\Dsf^\ast_{\sigma_{i_k}})$, which implies
\begin{equation*}
\|\nabla \varphi_{i_k}(\ubm)-\nabla \varphi_{i_k}(\vbm)\|_2 \leq \lambda \|\ubm-\vbm\|_2 \quad \text{with} \quad \lambda \defn \left(\alpha \Lmax + \Mmax\right),
\end{equation*}
for all $\ubm, \vbm \in \Im_\varepsilon(\Dsf^\ast_{\sigma_{i_k}})$.
From the smoothness of $\varphi_{i_k}$ and since $\xbm_{i_k}^\ast$ minimizes it, we have that
\begin{align*}
\varphi_{i_k}(\xbm_{i_k}^k) 
\leq \varphi_{i_k}(\xbm_{i_k}^\ast) + \nabla \varphi_{i_k}(\xbm_{i_k}^\ast)^\Tsf(\xbm_{i_k}^k-\xbm_{i_k}^\ast) + \frac{\lambda}{2}\|\xbm_{i_k}^k-\xbm_{i_k}^\ast\|_2^2
\leq \varphi_{i_k}(\xbm_{i_k}^\ast) + \frac{\lambda \varepsilon_k^2}{2},
\end{align*}
where in the second inequality we used the bound on the denoisers in Assumption~\ref{As:InexactDistance}. We thus have
\begin{align*}
\varphi_{i_k}(\xbm_{i_k}^k) 
&= \frac{1}{2\gamma}\|\xbm_{i_k}^k-(\xbm_{i_k}^{k-1} - \gamma \nabla_{i_k} g(\xbm^{k-1}))\|_2^2 + h_{i_k}(\xbm_{i_k}^k)\\
&\leq \min_{\ubm \in \R^{n_i}} \left\{\frac{1}{2\gamma}\|\ubm-(\xbm_{i_k} - \gamma \nabla_{i_k} g(\xbm_{i_k}^{k-1}))\|_2^2 + h_{i_k}(\ubm)\right\} + \frac{\lambda \varepsilon_k^2}{2}\\
&\leq \frac{1}{2\gamma}\|\xbm_{i_k}^{k-1}-(\xbm_{i_k}^{k-1} - \gamma \nabla_{i_k} g(\xbm^{k-1}))\|_2^2 + h_{i_k}(\xbm_{i_k}^{k-1}) + \frac{\lambda\varepsilon_k^2}{2}.
\end{align*}
By expanding the first term on the left side of the inequality and simplifying, we obtain
\begin{equation}
\label{Eq:InexactRegBound}
h_{i_k}(\xbm_{i_k}^k) \leq h_i(\xbm_{i_k}^{k-1}) - \nabla_{i_k} g(\xbm^{k-1})^\Tsf(\xbm_{i_k}^k-\xbm_{i_k}^{k-1}) - \frac{1}{2\gamma}\|\xbm_{i_k}^k-\xbm_{i_k}^{k-1}\|_2^2 + \frac{\lambda \varepsilon_k^2}{2}.
\end{equation}
From the smoothness of $g$, we also have
\begin{equation}
\label{Eq:InexactDataFitBound}
g(\xbm^k) \leq g(\xbm^{k-1}) + \nabla_{i_k} g(\xbm^{k-1})^\Tsf(\xbm_{i_k}^k-\xbm_{i_k}^{k-1}) + \frac{\Lmax}{2}\|\xbm_{i_k}^k-\xbm_{i_k}^{k-1}\|_2^2.
\end{equation}
By combining~\eqref{Eq:InexactRegBound} and~\eqref{Eq:InexactDataFitBound}, and setting $\gamma = 1/(\alpha\Lmax)$, we get
\begin{align}
f(\xbm^k) 
&= g(\xbm^k) + h(\xbm^k) \\
&\leq g(\xbm^{k-1}) + \nabla_{i_k} g(\xbm^{k-1})^\Tsf(\xbm_{i_k}^k-\xbm_{i_k}^{k-1}) + \frac{\Lmax}{2}\|\xbm_{i_k}^k - \xbm_{i_k}^{k-1}\|_2^2 \\
&\quad + h(\xbm^{k-1}) - \nabla_{i_k} g(\xbm^{k-1})^\Tsf(\xbm_{i_k}^k-\xbm_{i_k}^{k-1}) - \frac{1}{2\gamma}\|\xbm_{i_k}^k - \xbm_{i_k}^{k-1}\|_2^2 + \frac{\lambda \varepsilon_k^2}{2}\\
&= f(\xbm^{k-1}) - (\alpha - 1)\frac{\Lmax}{2}\|\xbm_{i_k}^k-\xbm_{i_k}^{k-1}\|_2^2 + \frac{\lambda \varepsilon_k^2}{2},
\end{align}
where we used the fact that $\xbm_j^k = \xbm_j^{k-1}$ for all $j \neq i_k$.
\end{proof}

\begin{lemma}
\label{Lem:LipRegInex}
Suppose Assumptions~\ref{As:NonDegenerate},~\ref{As:InexactDistance}, and~\ref{As:LipschitzPrior} are true. We then have 
\begin{equation}
\|\nabla h(\xbm)-\nabla h(\zbm)\|_2 \leq \Mmax \|\xbm-\zbm\|_2, \quad \xbm, \zbm \in \Im_\varepsilon(\Dsf_\sigmabm^\ast).
\end{equation}
\end{lemma}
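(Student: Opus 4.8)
The plan is to exploit the block-separable structure of $h$ and reduce the claim to the per-block Lipschitz bounds guaranteed by Assumption~\ref{As:LipschitzPrior}. Recall from~\eqref{Eq:FullReg} that $h(\xbm) = \sum_{i=1}^b h_i(\xbm_i)$, so that $\nabla h$ acts blockwise: its $i$th block is exactly $\nabla h_i(\xbm_i)$, depending only on the $i$th coordinate block. First I would observe that by the definition $\Im_\varepsilon(\Dsf^\ast_\sigmabm) = \{\xbm : \xbm_i \in \Im_\varepsilon(\Dsf^\ast_{\sigma_i}),\, i \in \{1,\dots,b\}\}$, the hypothesis $\xbm, \zbm \in \Im_\varepsilon(\Dsf^\ast_\sigmabm)$ guarantees that for \emph{each} block index $i$ both $\xbm_i$ and $\zbm_i$ lie in $\Im_\varepsilon(\Dsf^\ast_{\sigma_i})$. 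This is precisely the set over which Assumption~\ref{As:LipschitzPrior} certifies that $\nabla h_i$ is $M_i$-Lipschitz, so the per-block bound $\norm{\nabla h_i(\xbm_i) - \nabla h_i(\zbm_i)}_2 \leq M_i \norm{\xbm_i - \zbm_i}_2$ is applicable.

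The next step is to assemble these block estimates into the full-vector bound. Using the norm-preservation identity $\norm{\xbm}_2^2 = \norm{\xbm_1}_2^2 + \cdots + \norm{\xbm_b}_2^2$ from~\eqref{Eq:BlockExtraction}, applied to the difference of gradients, I would write
\begin{equation*}
\norm{\nabla h(\xbm) - \nabla h(\zbm)}_2^2 = \sum_{i=1}^b \norm{\nabla h_i(\xbm_i) - \nabla h_i(\zbm_i)}_2^2 \leq \sum_{i=1}^b M_i^2 \norm{\xbm_i - \zbm_i}_2^2.
\end{equation*}
Bounding each $M_i$ by $\Mmax \defn \max\{M_1, \dots, M_b\}$ and factoring it out gives $\Mmax^2 \sum_{i=1}^b \norm{\xbm_i - \zbm_i}_2^2$, which by norm preservation applied once more (now to $\xbm - \zbm$) equals $\Mmax^2 \norm{\xbm - \zbm}_2^2$. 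Taking square roots yields the claimed inequality $\norm{\nabla h(\xbm) - \nabla h(\zbm)}_2 \leq \Mmax \norm{\xbm - \zbm}_2$.

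The argument is essentially a bookkeeping exercise with no genuine analytical obstacle, since all the hard work—establishing smoothness and a finite Lipschitz constant for each $h_i$ over the enlarged region $\Im_\varepsilon(\Dsf^\ast_{\sigma_i})$—is already folded into Assumption~\ref{As:LipschitzPrior}. The only point requiring slight care is the membership check at the start: one must confirm that block membership in the individual sets $\Im_\varepsilon(\Dsf^\ast_{\sigma_i})$ is exactly what the product definition of $\Im_\varepsilon(\Dsf^\ast_\sigmabm)$ supplies, so that the blockwise Lipschitz hypothesis legitimately applies to each pair $(\xbm_i, \zbm_i)$. Once that is noted, the separability of $h$ together with the two invocations of norm preservation delivers the result immediately.
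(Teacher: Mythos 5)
Your proposal is correct and follows exactly the paper's own argument: separability of $h$ reduces the claim to the per-block $M_i$-Lipschitz bounds from Assumption~\ref{As:LipschitzPrior}, which are then assembled via the norm-preservation identity and the bound $M_i \leq \Mmax$. The paper's proof is a three-line version of the same computation, so there is nothing to add.
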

\begin{proof}
Consider $\xbm, \zbm \in \Im_\varepsilon(\Dsf_\sigmabm^\ast)$. From the $M_i$-Lipschitz continuity of each $\nabla h_i$, we get
\begin{align*}
\|\nabla h(\xbm)-\nabla h(\zbm)\|_2^2 
= \sum_{i = 1}^b \|\nabla h_i(\xbm_i) - \nabla h_i(\zbm_i)\|_2^2 \leq \sum_{i = 1}^b M_i^2 \|\xbm_i - \zbm_i\|_2^2 \leq \Mmax^2 \|\xbm-\zbm\|_2^2.
\end{align*}
\end{proof}

\section{Background material}
\label{Sup:Sec:BackgroundMaterial}

\subsection{Supermartingale convergence theorem}

Our analysis of the randomized BC-PnP algorithm relies on the classical result from the probability theory known as Supermargingale Convergence Theorem. The theorem is extensively used in the optimization literature (see Appendix A in~\cite{Ryu.Yin2023} and Proposition~2 in~\cite{Bertsekas2011}).

\begin{suptheorem}[\textbf{Supermartingale theorem}]
\label{Thm:SupermartingaleTheorem}
Let $F^k$, $G^k$, and $E^k$, be three sequences of random variables and let $\Fcal_k$ be sets of random variables such that $\Fcal_{k-1} \subseteq \Fcal_k$ for all $k \geq 1$. Assume that 
\begin{itemize}
\item $F^k$, $G^k$, and $E^k$ are functions of the random variables in $\Fcal_k$. Additionally, $F^k \geq 0$, $G^k \geq 0$, and $E^k \geq 0$ almost surely for $k \geq 1$.
\item For each $k \geq 1$, we have
\begin{equation*}
\E[F^k \,|\, \Fcal_{k-1}] \leq F^{k-1} - G^{k-1} + E^{k-1}.
\end{equation*}
\item We have almost surely
\begin{equation*}
\sum_{k = 0}^\infty E^k < \infty.
\end{equation*}
\end{itemize}
Then, we have almost surely
\begin{itemize}
\item $\sum_{k = 1}^\infty G^{k-1} < \infty$;
\item $F^k \rightarrow F^\infty$, where $F^\infty$ is a nonegative random variable.
\end{itemize}
\end{suptheorem}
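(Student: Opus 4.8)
The plan is to reduce this Robbins--Siegmund statement to the elementary convergence theorem for \emph{nonnegative} supermartingales by passing to an auxiliary process and localizing with a stopping time. First I would introduce the accumulated process
\begin{equation*}
M^k \defn F^k + \sum_{j = 0}^{k-1} G^j - \sum_{j = 0}^{k-1} E^j, \quad k \geq 0,
\end{equation*}
with empty sums at $k = 0$ read as zero. Since $G^j$ and $E^j$ are functions of the variables in $\Fcal_j$ and $\Fcal_j \subseteq \Fcal_{k-1}$ for $j \leq k-1$, both partial sums are $\Fcal_{k-1}$-measurable. Taking conditional expectation and inserting the hypothesis $\E[F^k \mid \Fcal_{k-1}] \leq F^{k-1} - G^{k-1} + E^{k-1}$ then yields $\E[M^k \mid \Fcal_{k-1}] \leq M^{k-1}$, so $\{M^k\}$ is a supermartingale adapted to $\{\Fcal_k\}$.

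The difficulty is that $M^k$ is \emph{not} bounded below by a deterministic constant: from $F^k \geq 0$ and $\sum_{j < k} G^j \geq 0$ one only gets $M^k \geq -\sum_{j = 0}^{\infty} E^j$, and although this tail sum is finite almost surely it need not be integrable, so Doob's theorem cannot be applied directly. I would fix this by localizing. For each integer $m \geq 1$ define
\begin{equation*}
\tau_m \defn \inf\Big\{k \geq 0 : \textstyle\sum_{j = 0}^{k} E^j > m\Big\},
\end{equation*}
which is a genuine $\{\Fcal_k\}$-stopping time because the partial sums of $E^j$ are adapted. For every $k$ one checks that the partial sum appearing in $M^{k \wedge \tau_m}$ satisfies $\sum_{j = 0}^{(k \wedge \tau_m)-1} E^j \leq m$, so $M^{k \wedge \tau_m} + m \geq 0$. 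Being a stopped supermartingale shifted by the constant $m$, this is a nonnegative supermartingale, and the classical nonnegative supermartingale convergence theorem gives that $M^{k \wedge \tau_m}$ converges almost surely to a finite limit.

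To pass from the localized to the global statement I would exploit $\sum_k E^k < \infty$ almost surely. One has $\{\tau_m = \infty\} = \{\sum_j E^j \leq m\}$, and these events increase in $m$ to the full-measure set $\{\sum_j E^j < \infty\}$. On $\{\tau_m = \infty\}$ the stopped and unstopped processes coincide, $M^{k \wedge \tau_m} = M^k$, so $M^k$ converges there; taking the union over $m$ shows $M^k \to M^\infty$ almost surely with $M^\infty$ finite.

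Finally I would unwind the definition to recover both conclusions. Writing $F^k + \sum_{j < k} G^j = M^k + \sum_{j < k} E^j$ and letting $k \to \infty$, the right-hand side converges almost surely, since $M^k \to M^\infty$ and $\sum_j E^j < \infty$. As $F^k \geq 0$, the left-hand side dominates the nondecreasing partial sums $\sum_{j < k} G^j$, which are therefore almost surely bounded; this gives $\sum_{k} G^{k-1} < \infty$ almost surely and convergence of $\sum_{j < k} G^j$ to a finite $\Gamma$. Subtracting, $F^k = M^k + \sum_{j < k} E^j - \sum_{j < k} G^j$ converges almost surely to $F^\infty$, which is nonnegative as a limit of nonnegative terms. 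The only genuinely delicate point is the localization of the second paragraph, which supplies the boundedness from below that the basic supermartingale convergence theorem requires and that the raw process lacks.
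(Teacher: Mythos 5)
The paper does not actually prove this statement: it is quoted as a classical background result, with the reader referred to Appendix~A of~\cite{Ryu.Yin2023} and Proposition~2 of~\cite{Bertsekas2011}, so there is no in-paper argument to compare yours against. Your proposal is a correct and complete proof, and it is in fact the standard proof of the Robbins--Siegmund lemma: the compensated process $M^k = F^k + \sum_{j<k} G^j - \sum_{j<k} E^j$ is a supermartingale because the partial sums are $\Fcal_{k-1}$-measurable, your localization is sound since $\sum_{j=0}^{(k\wedge\tau_m)-1} E^j \leq m$ holds both on $\{\tau_m \geq k\}$ (take the defining inequality at index $k-1$) and on $\{\tau_m < k\}$ (the sum stops one index before the first exceedance), and $\{\tau_m = \infty\} = \{\sum_j E^j \leq m\}$ exhausts a full-measure set as $m \to \infty$, which correctly globalizes the convergence. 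The only point worth flagging is an integrability issue that the theorem statement itself leaves implicit: for $M^{k\wedge\tau_m} + m$ to be a \emph{bona fide} nonnegative supermartingale to which Doob's convergence theorem applies, one needs $M^k \in L^1$, which is not guaranteed by the stated hypotheses alone (nonnegativity only makes the conditional expectations well defined in $[0,\infty]$); this is a standing assumption in the cited references as well, so it does not weaken your argument relative to the literature, but a careful write-up should either assume $F^0, E^k \in L^1$ (whence all $F^k$, $G^k$, $M^k$ are integrable by induction on the drift inequality) or invoke the convergence theorem for generalized nonnegative supermartingales.
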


\subsection{MMSE denoising as proximal operator}
\label{Sup:Sec:MMSEBackground}

The relationship between MMSE estimation and regularized inversion has been established by Gribonval in~\cite{Gribonval2011} and has been discussed in other contexts~\cite{Gribonval.Machart2013, Kazerouni.etal2013, Gribonval.Nikolova2021}. This relationship was formally connected to PnP methods in~\cite{Xu.etal2020}, leading to their new interpretation for MMSE denoisers. In this section, we review the key argument connecting MMSE denoising and proximal operators. 

It is well known that the estimator~\eqref{Eq:MMSEDenoiser} can be compactly expressed using the \emph{Tweedie's formula}
\begin{equation}
\label{Eq:Tweedie}
\Dsf^\ast_{\sigma_i}(\zbm_i) = \zbm_i - \sigma_i^2 \nabla h_{\sigma_i}(\zbm_i) \quad\text{with}\quad h_{\sigma_i}(\zbm_i) = -\log(p_{\zbm_i}(\zbm_i)),
\end{equation}
which can be obtained by differentiating~\eqref{Eq:MMSEDenoiser} using the expression for the probability distribution
\begin{equation}
\label{Eq:ConvRel}
p_{\zbm_i}(\zbm_i) = (p_{\xbm_i} \ast \phi_{\sigma_i})(\zbm_i) = \int_{\R^{n_i}} \phi_{\sigma_i}(\zbm_i-\xbm_i)p_{\xbm_i}(\xbm_i) \d \xbm_i,
\end{equation}
where
$$\phi_{\sigma_i}(\xbm_i) \defn \frac{1}{(2\pi\sigma_i^2)^{\frac{n_i}{2}}} \exp\left(-\frac{\|\xbm_i\|^2}{2\sigma_i^2}\right).$$
Since $\phi_{\sigma_i}$ is infinitely differentiable, so are $p_{\zbm_i}$ and $\Dsf^\ast_{\sigma_i}$. By differentiating $\Dsf^\ast_{\sigma_i}$, one can show that the Jacobian of $\Dsf_{\sigma_i}^\ast$ is positive definite (see Lemma 2 in~\cite{Gribonval2011})
\begin{equation}
\label{Eq:Jacobian}
\Jsf\Dsf_{\sigma_i}^\ast(\zbm_i) = \Ibf - {\sigma_i}^2 \Hsf h_{\sigma_i}(\zbm_i) \succ 0, \quad \zbm_i \in \R^{n_i},
\end{equation}
where $\Hsf h_{\sigma_i}$ denotes the Hessian matrix of the function $h_{\sigma_i}$. Finally, Assumption~\ref{As:NonDegenerate} also implies that $\Dsf_{\sigma_i}^\ast$ is a \emph{one-to-one} mapping from $\R^{n_i}$ to $\Im(\Dsf_{\sigma_i}^\ast)$, which means that ${(\Dsf_{\sigma_i}^\ast)^{-1}: \Im(\Dsf_{\sigma_i}^\ast) \rightarrow \R^{n_i}}$ is well defined and also infinitely differentiable over $\Im(\Dsf_{\sigma_i}^\ast)$ (see Lemma 1 in~\cite{Gribonval2011}). This directly implies that the regularizer $h_i$ in~\eqref{Eq:ExpReg} is also infinitely differentiable for any $\xbm_i \in \Im(\Dsf_{\sigma_i}^\ast)$.

We will now show that
\begin{align}
\label{Eq:DenoiserIsProx}
\Dsf_{\sigma_i}^\ast(\zbm_i) &= \prox_{\gamma h_i}(\zbm_i) = \argmin_{\xbm \in \R^{n_i}}\left\{\frac{1}{2}\|\xbm_i-\zbm_i\|^2 + \gamma h_i(\xbm_i)\right\}\nonumber
\end{align}
where $h_i$ is a (possibly nonconvex) function defined in~\eqref{Eq:ExpReg}. Our aim is to show that $\ubm^\ast = \zbm_i$ is the unique stationary point and global minimizer of
$$
\varphi(\ubm) \defn \frac{1}{2}\|\Dsf_{\sigma_i}^\ast(\ubm)-\zbm_i\|^2 + \gamma h_i(\Dsf_{\sigma_i}^\ast(\ubm)), \quad \ubm \in \R^{n_i}.
$$
By using the definition of $h_i$ in~\eqref{Eq:ExpReg} and the Tweedie's formula~\eqref{Eq:Tweedie}, we get
\begin{align*}
\varphi(\ubm) 
= \frac{1}{2}\|\Dsf_{\sigma_i}^\ast(\ubm)-\zbm_i\|^2 - \frac{\sigma_i^4}{2}\|\nabla h_{\sigma_i}(\ubm)\|^2 + \sigma_i^2h_{\sigma_i}(\ubm).
\end{align*}
The gradient of $\varphi$ is then given by
\begin{align*}
&\nabla \varphi(\ubm) 
= [\Jsf\Dsf^\ast_{\sigma_i}(\ubm)](\Dsf_{\sigma_i}^\ast(\ubm)-\zbm_i) + \sigma_i^2 [\Ibf - \sigma_i^2 \Hsf h_{\sigma_i}(\ubm)]\nabla h_{\sigma_i}(\ubm) = [\Jsf \Dsf_{\sigma_i}^\ast(\ubm)](\ubm-\zbm_i),
\end{align*}
where we used~\eqref{Eq:Jacobian} in the second line and~\eqref{Eq:Tweedie} in the third line. Now consider a scalar function ${q(\nu) = \varphi(\zbm_i+\nu\ubm)}$ and its derivative
$$q'(\nu) = \nabla \varphi(\zbm_i+\nu\ubm)^\Tsf\ubm = \nu \ubm^\Tsf [\Jsf\Dsf_{\sigma_i}^\ast(\zbm_i+\nu\ubm)]\ubm.$$
Positive definiteness of the Jacobian~\eqref{Eq:Jacobian} implies $q'(\nu) < 0$ and $q'(\nu) > 0$ for $\nu < 0$ and $\nu > 0$, respectively. Thus, $\nu = 0$ is the global minimizer of $q$. Since $\ubm \in \R^{n_i}$ is an arbitrary vector, we have that $\varphi$ has no stationary point beyond $\ubm^\ast = \zbm_i$ and that $\varphi(\zbm_i) < \varphi(\ubm)$ for any $\ubm \neq \zbm_i$.

\begin{figure}[!t]
    \centering
    \includegraphics[width=\textwidth]{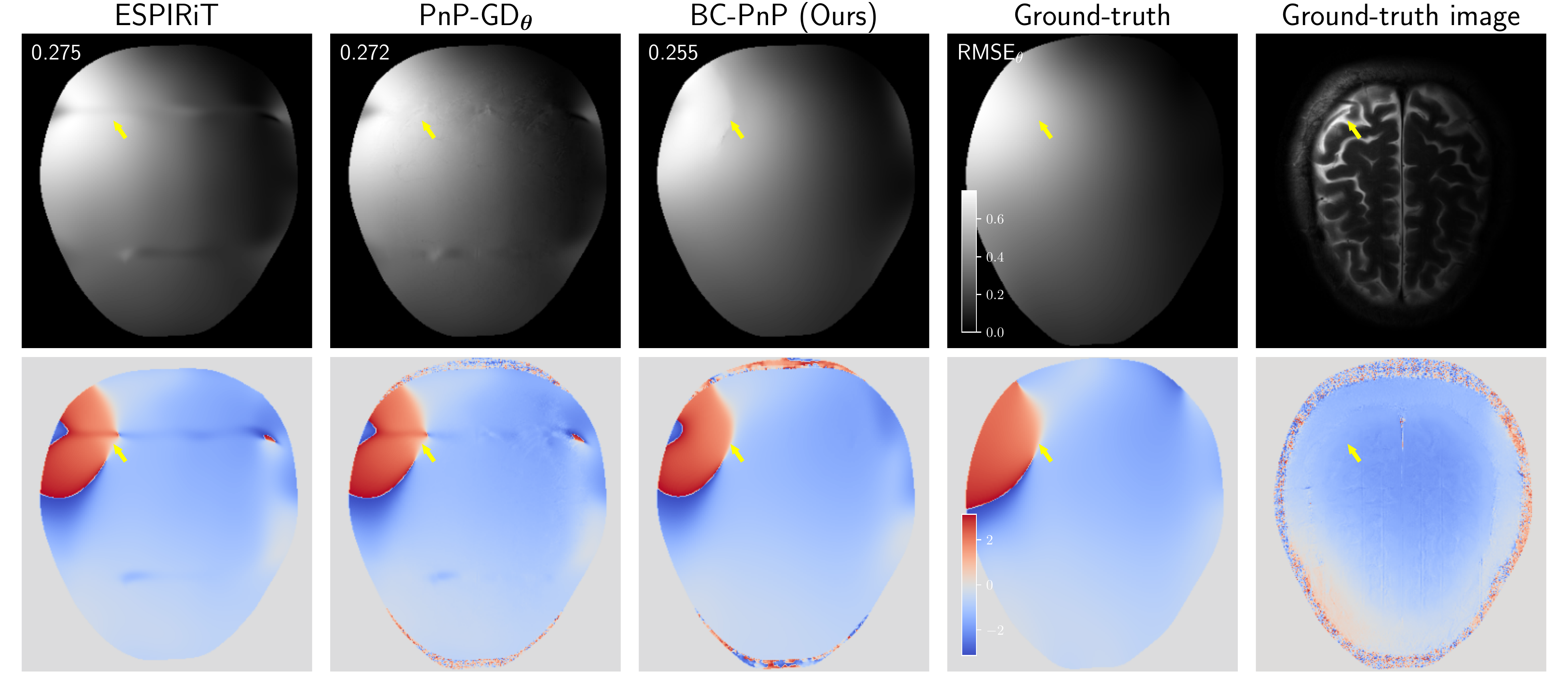}
    \caption{~\emph{Illustration of estimated CSM from several methods on CS-PMRI with the sampling factor $R=6$. The top and the bottom rows are the magnitude and the phase of the CSMs, respectively. The quantities in the top-left corner of each image in the top row provide RMSE values for each method. Ground-truth image was obtained using the fully sampled data corresponding to the ground truth CSMs. This figure highlights the effectiveness of BC-PnP for estimating the measurement operator.}}
    \label{fig:exp_pmri_theta}
\end{figure}

\section{Additional Technical Details}
\label{Sup:Sec:AdditionalNumericalResults}

We present some technical details and results that were omitted from the main paper. We used the following root mean squared error (RMSE) for quantitatively comparing different algorithms
\begin{equation}
    \mathsf{RMSE}(\zbmhat, \zbm) = \frac{\norm{\zbmhat -\zbm}_2}{\norm{\zbm}_2}
\end{equation}
where $\zbmhat$ and $\zbm$ represents the estimation and ground truth respectively. We ran BC-PnP and its ablated variants using a maximum number of $500$ iterations with the stopping criterion measuring the relative norm difference between iterations to be less than $10^{-5}$. 
We trained denoisers for images and measurement operators to optimize the MSE loss by using the Adam~\cite{Kingma.Ba2015} optimizer. We set the learning rate of Adam to $10^{-5}$.
We conducted all experiments on a machine equipped with an AMD Ryzen Threadripper 3960X 24-Core Processor and 4 NVIDIA GeForce RTX 3090 GPUs.

\subsection{Additional Details for CS-PMRI}

Figure~\ref{fig:exp_pmri_theta} illustrates the visual results of the estimated CSM for an acceleration factor of $R=6$. The widely used ESPIRiT algorithm estimates CSM directly from the ACS of the raw measurement, leading to imaging artifacts highlighted by yellow arrows under a high acceleration factor. Although PnP-GD$_{\thetabm}$ can reduce such imaging artifacts by jointly estimating images and CSMs, its performance is suboptimal compared to BC-PnP. Figure~\ref{fig:exp_pmri_theta} shows the effectiveness and superior performance of BC-PnP in estimating CSMs, which we attribute to its ability to use a DL denoiser as the CSM prior.

Figure~\ref{fig:supp_exp_pmri} visually illustrates results from several well-known methods, including those were omitted from the main paper, on CS-PMRI with acceleration factors $R=8$ and $R=6$. ESPIRiT-TV leads to the loss of details due to the well-known ``staircasing effect''. While Unet can outperform ESPIRiT-TV by learning a prior end-to-end from a training dataset, its performance is suboptimal compared with ISTA-Net+ that incorporates the pre-estimated measurement operator into the network architecture. PnP and PnP-GD$_{\thetabm}$ use pre-trained DL denoiser as image priors, leading competitive performance against ISTA-Net+. Figure~\ref{fig:supp_exp_pmri} demonstrates that BC-PnP can achieve quantitatively and qualitatively superior performance over several baselines by jointly estimating images and CSMs.

Figure~\ref{fig:supp_fastmri_groundtruth} shows ground truth MR images corresponding to the fully-sampled data that was used to generate measurement on the CS-PMRI experiments.
\begin{figure}[t]
    \centering
    \includegraphics[width=\textwidth]{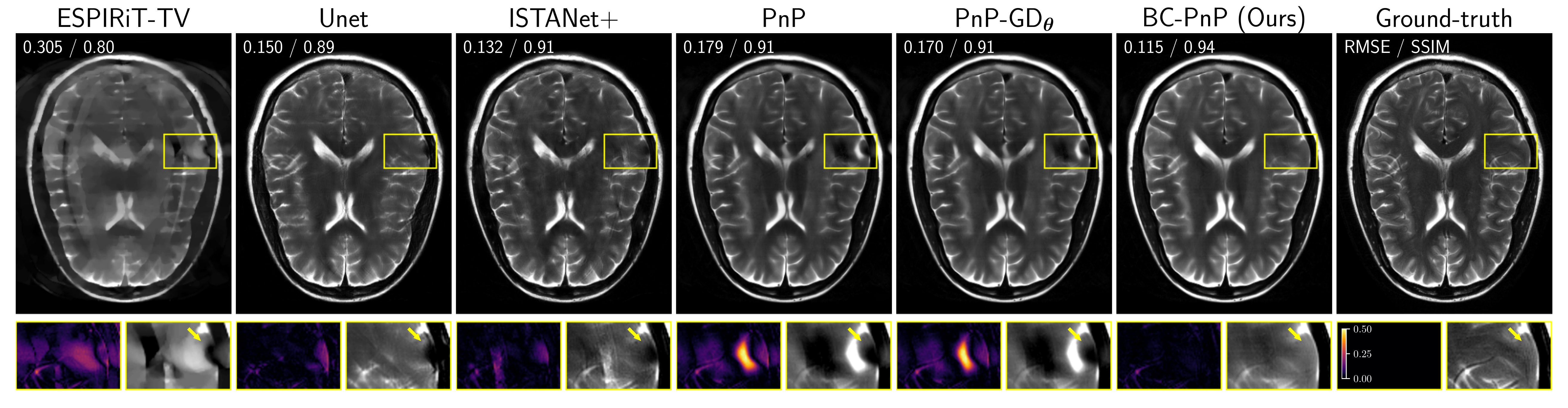}
    \includegraphics[width=\textwidth]{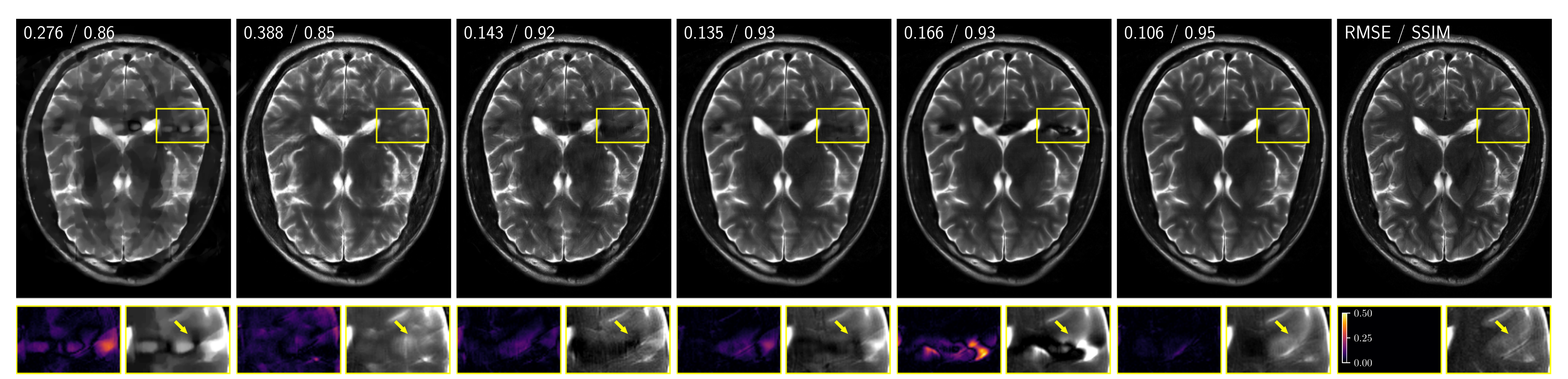}
    \caption{~\emph{Illustration of results from several well-known methods, including those were omitted from the main paper, on CS-PMRI with the sampling factor $R=8$ (\textbf{top row}) and $R=6$ (\textbf{bottom row}). The quantities in the top-left corner of each image provide the RMSE and SSIM values for each method. The squares at the bottom of each image shows the error and the corresponding zoomed area in the image. Note the excellent performance of BC-PnP that uses a learned deep denoiser on the CSMs.}} 
    \label{fig:supp_exp_pmri}
\end{figure}

\begin{figure}[t]
    \centering
    \includegraphics[width=\textwidth]{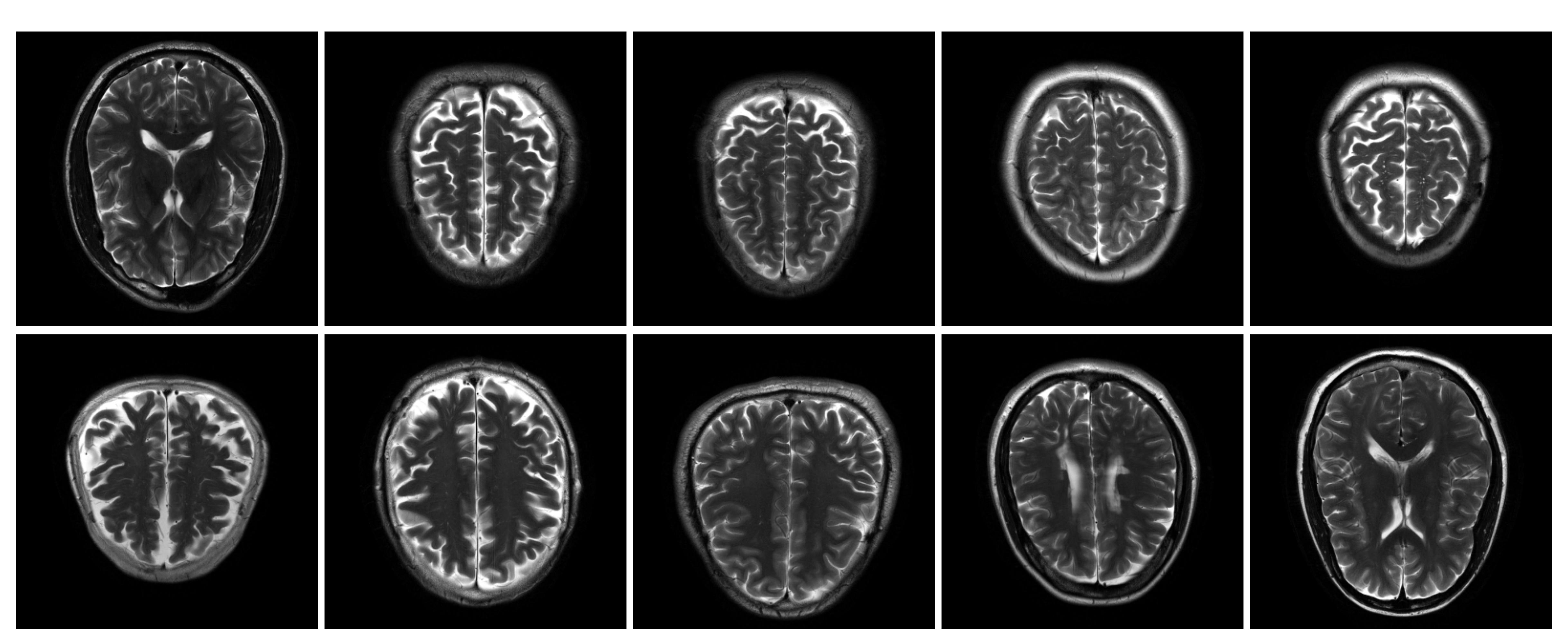}
    \caption{~\emph{Ground truth images that were used to generate the measurements in CS-PMRI.}}
    \label{fig:supp_fastmri_groundtruth}
\end{figure}

\subsection{Additional Details for Blind Image Deblurring}

Figure~\ref{fig:supp_exp_deconv_idx10} presents visual results from several well-known methods, including those were omitted from the main paper, on blind image deblurring with the Gaussian kernel. Pan-DCP estimates a deblur kernel from the blurry measurement and then reconstructs the image using a non-DL image prior. SelfDeblur jointly trains two deep image priors (DIPs) on the image and the blur kernel, respectively, but note how its reconstructions are translated compared to the ground truth. DeblurGANv2 enables debluring of an image without the knowledge of the blur kernel, but its performance is noticeably suboptimal. While USRNet reconstructs sharp images given a pre-estimated kernel, the details in the corresponding images are inconsistent relative to the ground truth (see the texture of the tiger skin highlighted by yellow arrows). Note how BC-PnP using a deep denoiser on the unknown kernel outperforms several baselines and matches the performance of PnP that knows the true kernel.

Figure~\ref{fig:supp_deconv_groundtruth} shows the images that were used to generate measurements for blind image deblurring.

\begin{figure}[t]
    \centering
    \includegraphics[width=\textwidth]{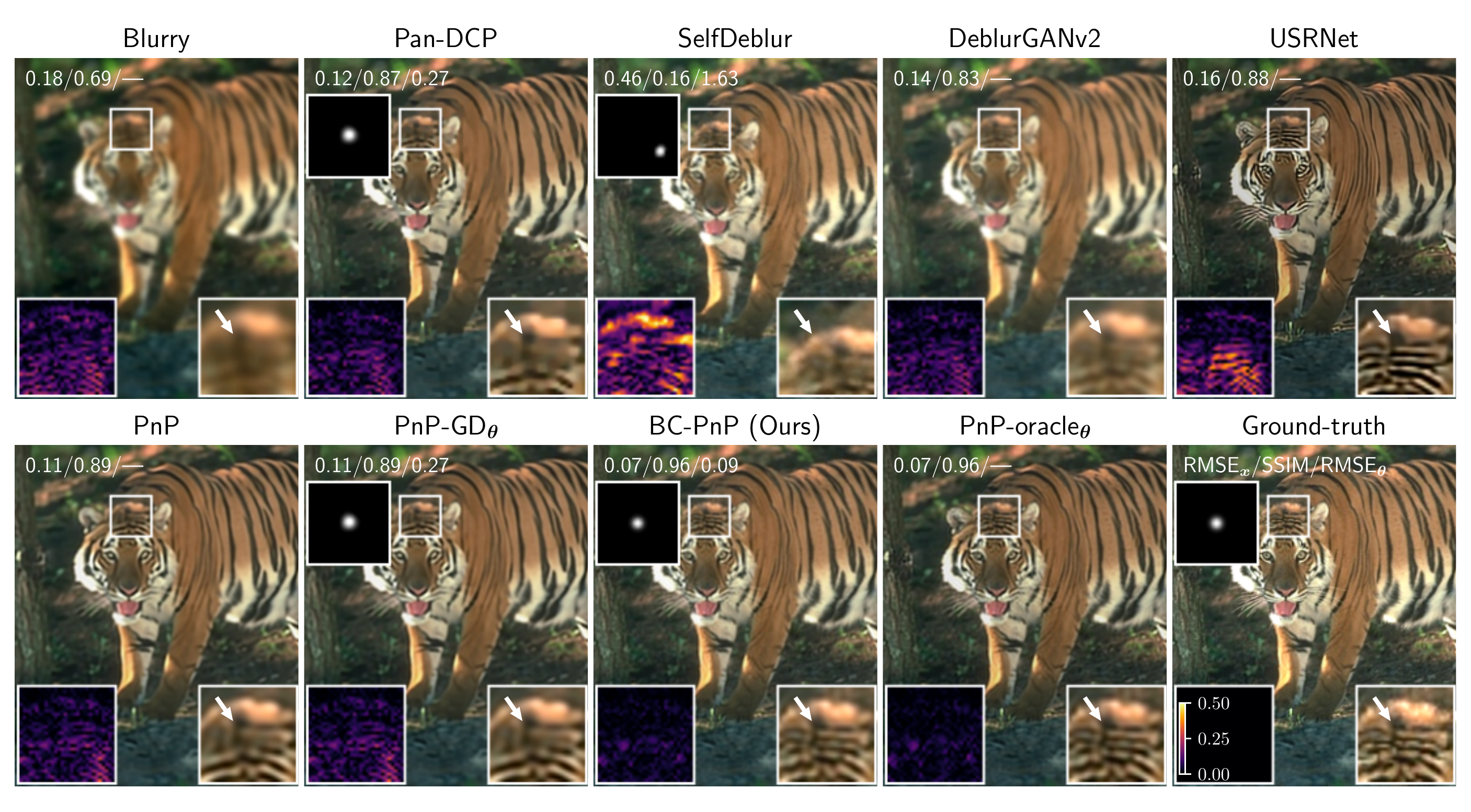}
    \caption{~\emph{Results from several well-known methods, including those were omitted from the main paper, on blind image deblurring with the Gaussian kernel. The squares at the top of each image show the estimated kernels. The quantities in the top-left corner of each image provide the RMSE and SSIM values for each method. The squares at the bottom of each image highlight the error and the corresponding zoomed image region. Note how BC-PnP using a deep denoiser on the unknown kernel performs as well as the oracle PnP that knows the ground truth kernel. Note also the effectiveness of BC-PnP for estimating the blur kernel.}}
    \label{fig:supp_exp_deconv_idx10}
\end{figure}

\begin{figure}[t]
    \centering
    \includegraphics[width=\textwidth]{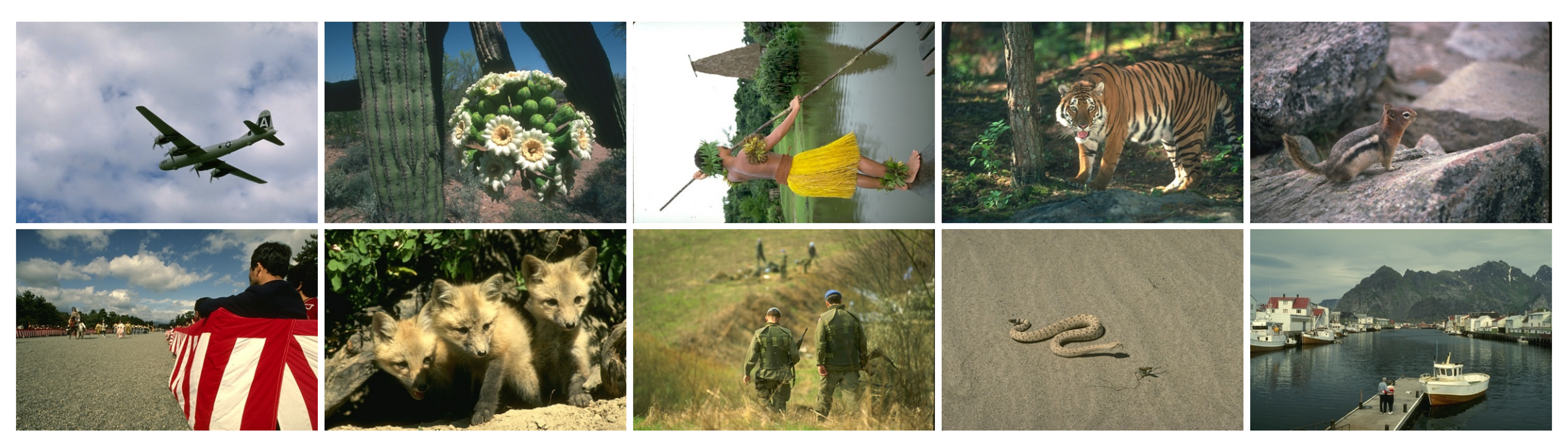}
    \caption{~\emph{Ground truth images used for generating measurements for blind image deblurring.}}
    \label{fig:supp_deconv_groundtruth}
\end{figure}

\end{document}